\newtheorem{theorem}{Theorem}[section]
\newtheorem{lemma}[theorem]{Lemma}
\newtheorem{claim}[theorem]{Claim}
\newtheorem{definition}[theorem]{Definition}
\newtheorem{example}{Example}[section]
\newcommand{\transposed}{{\mathrm{T}}}
\definecolor{hehngrey}{rgb}{0.5, 0.5, 0.5}
\def\ve#1{{\mathchoice{\mbox{\boldmath$\displaystyle #1$}}%
              {\mbox{\boldmath$\textstyle #1$}}%
              {\mbox{\boldmath$\scriptstyle #1$}}%
              {\mbox{\boldmath$\scriptscriptstyle #1$}}}}
\newcommand{\sgn}{{\mathrm{sgn}}}
\newcommand{\cog}{\mathrm{cog}\,}
\newcommand{\Aut}{{\mathrm{Aut}}}
\newcommand{\mymod}{\,\mathrm{mod}\,}
\newcommand{\Prob}{{\mathrm{Pr}}}
\newcommand{\LMS}{{\mathrm{LMS}}}
\newcommand{\argmax}[1]{\!\! \begin{array}[t]{c}\displaystyle{\rm argmax}\\[-2mm]\scriptstyle #1\end{array} \!}
\newcommand{\stopsets}[1]{\left|{\mathcal{S}}_{\sigma}\left(\ve{H}_{#1}\right)\right|}
\newcommand{\argmin}[1]{\!\! \begin{array}[t]{c}\displaystyle{\rm argmin}\\[-2mm]\scriptstyle #1\end{array} \!}
\newcommand{\randfun}[1]{\!\! \begin{array}[t]{c}\displaystyle{\rm rand}\\[-2mm]\scriptstyle #1\end{array} \!}
\newcommand{\mapping}{{\mathrm{map}}}
\newcommand{\Expect}{\mathrm{E}}
\newcommand{\BER}{\mathrm{BER}}
\newcommand{\FER}{\mathrm{FER}}
\newcommand{\ext}{\mathrm{e}}
\newcommand{\Eb}{E_{\mathrm{b}}}
\def\setN{\mathbb{N}}
\begin{document}

\title{{Multiple-Bases Belief-Propagation Decoding\\ of High-Density Cyclic Codes}
\thanks{Parts of the results were presented at the IEEE International
  Symposium on Information Theory (ISIT), Nice, France, 2007. The work was supported in part
  by the NSF grant CCF 0644427 awarded to Olgica Milenkovic.}}
\author{Thorsten Hehn$^{\ddagger}$, Johannes B.\ Huber$^{\ddagger}$, Olgica Milenkovic$^{\dagger}$,
Stefan Laendner$^{\ddagger}$\\$\,\!^{\ddagger}\,$Institute for Information Transmission (LIT)\\FAU Erlangen-Nuremberg,
Germany\\$\,\!^{\dagger}\,$Department of Electrical and Computer Engineering\\ University of Illinois at Urbana-Champaign, USA}

\date{\today}
\maketitle

\begin{abstract}
  We introduce a new method for decoding short and moderate length
  linear block codes with dense parity-check matrix representations of
  cyclic form, termed multiple-bases belief-propagation (MBBP). The proposed
  iterative scheme makes use of the fact that a code has many
  structurally diverse parity-check matrices, capable of detecting
  different error patterns. We show that this inherent code property
  leads to decoding algorithms with significantly better performance
  when compared to standard BP decoding. Furthermore, we describe how to
  choose sets of parity-check matrices of cyclic form amenable for multiple-bases
  decoding, based on analytical studies performed for the binary
  erasure channel. For several cyclic and extended cyclic codes, the MBBP decoding
  performance can be shown to closely follow that of
  maximum-likelihood decoders.
\end{abstract}

\textbf{Index Terms}: Algebraic Codes, Belief Propagation, Multiple-Bases Belief-Propagation Decoding, Stopping Sets.
\onehalfspacing

\section{Introduction}

Classical algebraic codes of short block length have large minimum
distance and efficient soft-decision decoding algorithms
\cite{hanetal93, fossorieretal95, lucasetal98}. Consequently, these codes
represent a good choice for low-delay applications where high
transmission reliability is required.
Algebraic codes are also frequently used as components of product
codes and parts of concatenated coding schemes. In modern storage and
communication systems, low-density parity-check (LDPC) codes are most
often used as the inner coding scheme. For this reason it is desirable
to implement a soft-input soft-output decoder for algebraic codes as a
belief-propagation (BP) algorithm. This is a reasonable choice as the
decoder can handle both types of codes. Since algebraic block codes
have dense parity-check matrices with a large number of short
cycles~\cite{halfordetal06,halfordetal06b}, BP decoders offer poor
error-correcting performance for such codes.

The use of redundant parity-check matrices for iterative decoding schemes when signaling over the
binary erasure channel (BEC) has been excessively studied. Several
authors proposed using a high number of redundant checks~\cite{schwartzetal06,
  hanetal07, ghaffaretal07, hollmannetal07} to improve the decoding
performance. This type of decoding has also drawn the attention of
researchers who studied this concept in the context of the AWGN
channel. Other authors proposed {\emph{adaptive BP
algorithms}}~\cite{kothiyaletal05,jiangetal06}, which iteratively
optimize the matrix representation during the decoding process.
Such schemes have large implementation complexity due to the
required matrix reduction after each iteration. The {\emph{random
redundant decoding}} (RRD)~\cite{halfordetal06} algorithm does not
require this type of processing and obtains very promising
results. This is accomplished by serially altering the applied
matrix representation within the decoding process. Another closely
related approach was described in \cite{andrewsetal02}, where a
simple simulation-based study using randomly chosen parity-check
matrices of the $[24,12,8]$ extended Golay code was performed.
This approach offers poor performance when compared both to the
performance of the RRD algorithm and the algorithms described in
this work.

The approach followed in this paper draws upon the prior work of
the authors on BEC decoding \cite{hehnetal08b} and introduces a
novel decoding method that operates in parallel and iterative
fashion on a collection of parity-check matrices. Using this set
of decoder representations, the algorithm performs joint output
processing in order to estimate the transmitted codeword. This
output processing can occur at various stages of decoding and it
may have various degrees of complexity. We distinguish between
techniques where the BP algorithms run separately and compare them
to schemes where the decoders are allowed to exchange information.
In the latter case, we investigate processing of the form of
\emph{metric and complexity selection}, \emph{averaging of
probabilities}~\cite{laendneretal05}, \emph{information
combining}~\cite{landetal06}, as well as certain reliability-based
schemes.

As the different representations of the parity-check matrix form
bases of the dual code, we refer to the proposed approaches as
{\emph{Multiple-Bases Belief-Propagation}} (MBBP) decoding. For
the purpose of MBBP decoding, one needs to identify classes of
parity-check matrices that \emph{jointly} offer good decoding
performance. In order to accomplish this task, we propose using
parity-check matrix design techniques originally developed for the
BEC but subsequently used for signaling over the AWGN channel.
This approach leads to good performance results, as observed both
by the authors and other researchers~\cite{zengetal05}. Moreover,
this method relies on the fact that the performance of a
parity-check matrix for the BEC is completely characterized by
combinatorial entities termed stopping sets~\cite{dietal02}; and,
that pseudocodewords for linear programming
decoders of several classes of channels represent stopping sets for any channel
in the Tanner graph~\cite{feldman03,kelleyetal07}.
Although we focus our attention on parity-check matrix construction
techniques for cyclic codes, the described concepts can be generalized
for other classes of codes as well.

The main differences between the existing RRD algorithm and the proposed MBBP scheme are that RRD
operates in a \emph{serial} fashion in terms of periodically
permuting the received word, while MBBP works in a parallel manner. Further, the RRD algorithm uses
message scaling processing between iterations that tends to increase
the overall complexity of the scheme. Contrary, the MBBP algorithm deploys the standard update rules defined by message passing decoding within its parallel cores. Finally, the RRD algorithm uses a greedy search over the Tanner graphs to
find a representation which offers good performance as well as random shuffling techniques of the variable nodes. The MBBP algorithm relies on specially designed parity-check matrix families for a given code.

The paper is organized as follows. Section \ref{sec:definitions}
introduces relevant definitions and terminology, while Section
\ref{sec:bases_selection} contains a description of the bases
selection process. Section \ref{sec:mbbp_and_variations} presents a set of different variations of MBBP decoding algorithms. Simulation results
are presented in Section \ref{sec:results}.

\section{Definitions and Terminology}
\label{sec:definitions}

We start by introducing the terminology related to stopping sets
and the BEC. We also provide the
terminology needed for describing the MBBP decoding approach.

\begin{definition}
  Let $\ve{A}$ be an $m\times n$ matrix, and let the columns of $\ve{A}$
  be indexed by the set of integers ${\mathcal{J}}=\{{0,\ldots,n-1\}}$. For a
  set ${\mathcal{I}} \subseteq {\mathcal{J}}$, we define the
  restriction of $\ve{A}$ to ${\mathcal{I}}$ as the $m \times
  |{\mathcal{I}}|$ array of elements composed of the columns of
  $\ve{H}$ indexed by ${\mathcal{I}}$.
\end{definition}

When transmitting over the BEC, stopping sets completely determine the failure
modes of iterative decoders. For completeness we define these sets below~\cite{dietal02}.

\begin{definition} For a given parity-check matrix $\ve{H}$ of an $[n,k,d]$ binary linear code $\mathcal{C}$,
a stopping set $\mathcal{S}(\ve{H})$ of size $\sigma$ is
a set ${\mathcal{I}}$ of $\sigma$ positions in the codeword for which the restriction of $\ve{H}$ to ${\mathcal{I}}$ does not contain rows of Hamming weight one.
\end{definition}

Clearly, a codeword is a stopping set and consequently the size of the
smallest stopping set of any given parity-check matrix cannot exceed
$d$.


In order to compare different parity-check matrix representations
with respect to their decoding performance, we restrict our
attention to a simple evaluation criteria: the number of stopping
sets of size less than or equal to $\sigma$, for some predefined value
$1\leq\sigma\leq d$. Given a parity-check matrix $\ve{H}$, the
number of its stopping sets of size $\sigma$ will henceforth be
denoted by
$\left|{\mathcal{S}}_{\sigma}\left(\ve{H}\right)\right|$. Although
stopping sets are known to completely characterize the performance
of iterative decoders \emph{only} for the BEC, they also represent
a partial performance indicator for transmission over the AWGN
channel. This is due to the intimate connection between
stopping sets and pseudocodewords \cite{feldman03, koetteretal03}.

As we focus our attention on codes with parity-check matrices of
cyclic form, a code category that includes cyclic codes, we also
provide the following definitions.

\begin{definition}
  Let ${\mathcal{C}}$ be a binary, linear code and
  ${\mathcal{C}}^{\perp}$ its dual. A parity-check matrix of
  ${\mathcal{C}}$ is said to be of cyclic form if it consists of
  $n-k\leq m\leq n$ consecutive cyclic shifts of one chosen codeword
  of ${\mathcal{C}}^{\perp}$. In this context, the qualifier
  ``consecutive'' implies that the $(i+1)$-th row of the parity-check matrix, $1\leq i\leq m-1$, is the cyclic
  right shift of the $i$-th row by one position.

A code ${\mathcal{C}}$ is called cyclic if any cyclic shift of a
codeword $\ve{c}\in\mathcal{C}$ is also a codeword, and it necessarily
has at least one parity-check matrix of cyclic form.
\end{definition}

For a code with at least one parity-check matrix of cyclic form,
we introduce the notion of a partition of the set of codewords of
${\mathcal{C}}^{\perp}$ and the notion of a cyclic orbit generator
({\emph{cog}}).

\begin{definition}
Let ${\mathcal{C}}$ be a binary, linear, cyclic code. Partition
the set of codewords of ${\mathcal{C}}^{\perp}$ into disjoint
orbits (subsets) consisting of cyclic shifts of one codeword. Let
one designated codeword in the orbit be the representative of the
subset. This codeword is referred to as the cyclic orbit generator
(cog).
\end{definition}

Throughout the paper we focus our attention on cogs of minimum
Hamming weight. Little technical modifications are required in the
above definition to encompass parity-check matrices that are of
cyclic form when restricted to a proper subset of columns,
e.g.\ extended cyclic codes.

Let ${\mathcal{G}}$ be the set of cyclic orbit generators with
Hamming weight equal to the minimum distance of the dual code,
$d^{\perp}$. A cyclic orbit generator
$\cog_{\ell}\in{\mathcal{G}}$,
$\ell=1,\dots,\left|{\mathcal{G}}\right|$, is used to construct a
parity-check matrix $\ve{H}_{\ell}$,
$\ell=1,\dots,\left|{\mathcal{G}}\right|$, of size $m\times n$,
$n-k \leq m\leq n$, such that the row-rank of the matrix is
$n-k$\footnote{Here, and throughout the paper, we only consider
cogs that generate parity-check matrices with row-rank $n-k$. As a
consequence, we use the word \emph{bases} to describe the
underlying matrices, although the considered structures are
actually \emph{frames}. Frames are over-complete systems in which
every element of a vector space can be represented in a not
necessarily unique manner \cite{kovacevicetal07}.}. This matrix
consists of $m$ consecutive right shifts of $\cog_{\ell}$. To
avoid identical rows in $\ve{H}_{\ell}$ even if $m=n$ holds, only cogs with a period
of $n$, i.e. cogs with a cyclic orbit that consists of $n$
distinct shifts, are considered for the construction process.

Note that a redundant parity-check matrix of cyclic form can have
at most $n$ distinct rows\footnote{Cyclic matrices with $m=n$ rows
are
  also referred to as circulant matrices.}. The larger the value of
$m$, the larger the hardware implementation complexity of the BP
decoder. Nevertheless, based on extensive computer simulations, it
was observed that for decoding of algebraic codes signaled over
the AWGN channel the best decoding performance is achieved for
$m=n$. This finding holds for both the {\emph{bit error
    rate}} ($\BER$) and {\emph{frame error rate}} ($\FER$).
The reason supporting this observation is intuitively clear.
Consider a parity-check matrix of cyclic form for which $m=n-k$,
as shown in (\ref{eq:cyclic_parity_check_matrix}) for $m=3,
n=7$. Here, the symbol $x$ serves as placeholder for the bits of
the generating cog of the matrix.
\begin{equation}
\left(\begin{array}{cc|ccc|cc}1&x&x&x&1&0&0\\0&1&x&x&x&1&0\\0&0&1&x&x&x&1\end{array}\right)
\label{eq:cyclic_parity_check_matrix}
\end{equation}

As can be seen from (\ref{eq:cyclic_parity_check_matrix}), not all of
the seven bits participate in the same number of parity-check
equations - the column degrees of the parity-check matrix vary with the column.
There exist at least two bits (including the first and last) that
participate in only one parity-check, and therefore have very low
probability of being correctly decoded.
Depending on the particular choice of the cog, the set of symbols
at the beginning and at the end of the codeword is strongly
restricted with respect to the maximum number of parity-checks it
can participate in.
%
%
This problem can be solved by setting $m=n$: such a row-redundancy
allows for achieving equal error protection for all code
symbols~\cite{linetal04}. Therefore, square parity-check matrices
will be used throughout the remainder of this paper.

We conclude this section by introducing the notion of a
cog {\emph{family}} of a set of parity-check matrices.

\begin{definition}
Let ${\mathcal{F}}_1$, ${\mathcal{F}}_2$, $\dots$, ${\mathcal{F}}_F$ be a partition of the set of indices $\left\{1,\dots,\left|{\mathcal{G}}\right|\right\}$, i.e.

\[
{\mathcal{F}}_1\cup{\mathcal{F}}_2\cup\dots\cup
{\mathcal{F}}_F=\{1,\dots,\left|{\mathcal{G}}\right|\}\mbox{ and }
{\mathcal{F}}_f \cap {\mathcal{F}}_{f'}=\emptyset,\, \forall \;
f\not= f',
\]
so that the ``stopping set performance'' of $\ve{H}_{\ell}$ is
fixed within each family ${\mathcal{F}}_f$, for all
$\ell\in{\mathcal{F}}_f,$ and for all $f\in\{1,\dots,F\}$, and
this claim is true for all families in the partition. The
``stopping set performance'' of a parity-check matrix, defined for
both the BEC and AWGN channel, is the number of stopping sets of
size up to and including $d$. We refer to the set
$\{\cog_{\ell}\}$, ${\ell}\in{\mathcal{F}}_f$, as the $f$-th cog
family. \label{def:family}
\end{definition}

\section{Bases Selection for MBBP decoding}
\label{sec:bases_selection}

Recall that a linear $[n,k,d]$ code ${\mathcal{C}}$ is uniquely
defined by a parity-check matrix $\ve{H}$ of rank $n-k$ or a
generator matrix $\ve{G}$ of rank $k$. There usually exists a
large number of generator and parity-check matrices for the same
code. For BP decoding over AWGN channels, one usually seeks a
sparse parity-check matrix $\ve{H}$. The performance of the
algorithm strongly depends on some additional structural
properties of $\ve{H}$, such as the number and weight of
pseudocodewords.

Adding redundant rows to parity-check matrices improves the
performance of iterative decoding for the BEC, but usually has
detrimental effects on BP decoders correcting data signaled over
the AWGN channel. This can be attributed to the increase of the
number of short cycles and the density of the matrix. But, as was
shown by the authors in~\cite{laendneretal06}, adding judiciously
chosen redundant rows may increase the minimum weight of
pseudocodewords (and trapping sets) of the given parity-check
matrix.

In order to exploit the benefits offered by redundant parity-check
matrices with respect to pseudocodeword performance, while
controlling the negative effects on the cycle lengths, the
following approach can be used. Rather than decoding a received
word in terms of only one parity-check matrix, one can use a
collection of parity-check matrices, each with small
row-redundancy, in parallel. The results of the decoders operating
on different parity-check matrices can then be appropriately
combined. This is the main idea behind MBBP decoding, and for this
purpose, we propose to develop good heuristic techniques for
identifying parity-check matrices which offer both good individual
and joint decoding performance.

For cyclic algebraic codes, we proved in a companion paper
\cite{hehnetal08b} that parity-check matrices that consist of cyclic
shifts of carefully chosen cogs offer excellent stopping set
properties \cite{hehnetal06}. In what follows, we focus on identifying
families of cogs that obtain very good decoding performance for the
AWGN channel. As already pointed out, the iterative decoding
performance of a fixed parity-check matrix used over the AWGN channel
is, to a certain extent, strongly correlated with its BEC performance,
so that we use the total number of stopping sets of size up to $d$ as
our cog family optimization criteria.

Since cogs from the same family define parity-check matrices of cyclic
form with identical properties for the BEC and comparable properties
for the AWGN channel, it is desirable to identify the family with the
best performance and then exclusively use cogs from this family to
build matrices for MBBP decoding. Identifying all families along with its members by counting stopping sets in the corresponding matrices is computationally expensive \cite{mcgregoretal07}. Also, storing all cogs used for decoding
can be prohibitively costly, especially for long codes and MBBP
decoders with many bases. In order to avoid these problems, we propose
to use a cog \emph{mapping technique} that relies on the notion of the
\emph{automorphism group} of a code.
\begin{definition}\cite[Ch.\ 8]{macwilliamsetal77}
The permutations which send $\mathcal{C}$ into itself, i.e.\
codewords go into (possibly different) codewords, form the
automorphism group of the code $\mathcal{C}$, denoted by
$\Aut(\mathcal{C})$. If $\mathcal{C}$ is a linear code and
$\mathcal{C}^{\perp}$ is its dual code, then
$\Aut(\mathcal{C})=\Aut(\mathcal{C^{\perp}})$.
\end{definition}

It is straightforward to prove that there exists a set of
permutations ${\mathcal{P}}$ in the automorphism group of a cyclic
code which map one cog into another cog from the same family.
Fixing one cog, and then applying a subset of permutations from
${\mathcal{P}}$ to it, represents a convenient way for generating
redundant parity-check matrices with identical densities and
comparable performance under MBBP decoding.

We provide next a partial characterization of the set
$\mathcal{P}$ for cyclic codes. More precisely, we describe how to
find a large set of permutations $\mathcal{P}$ for which
$\stopsets{a}=\stopsets{b}$, for $\sigma\leq n$, where the
generating cogs of $\ve{H}_a$ and $\ve{H}_b$ satisfy
$\cog_{b}=\kappa(\cog_{a})$, and where
$\kappa(\cdot)\in{\mathcal{P}}$. Here, $\kappa(\cog_{a})$ is used
to denote the action of the permutation $\kappa$ on the vector
$\cog_{a}$. Note that the same framework can be used when only
stopping sets up to a size smaller than or equal to $d$ are
considered.

It is well known that the automorphism group of a cyclic code
contains two classes of permutations~\cite{macwilliamsetal77}:

\begin{itemize}
\item[$P_1$:]{The class of cyclic permutations
$\alpha^0,\alpha^1,\ldots,\alpha^{n-1}$, where

\begin{eqnarray*}
\alpha: i &\to& (i+1) \, \mymod n,\; i=0,\ldots,n-1, \\
\alpha(\ve{c})&=&(c_{n-1},c_0,c_1,\ldots,c_{n-2}).
\end{eqnarray*}

} \item[$P_2$:]{The class of permutations
$\beta^0,\beta^1,\ldots,\beta^{h-1}$, where

\begin{eqnarray}
\label{eq:perm_i_to_2i}
\beta: i &\to& (2\cdot i) \, \mymod n,\; i=0,\ldots,n-1,\\
\beta( \ve{c})&=&(c_0\,c_{(n+1)/2}\,c_1\,\ldots c_{(n-1)/2}),\nonumber
\end{eqnarray}
and where $h$ denotes the cardinality of the cyclotomic coset of
the $n$-th roots of unity that contains one. Above, all subscripts
are taken modulo $n$. For extended cyclic codes, the described
permutations are only to be applied to the cyclic part of the
codeword.}
\end{itemize}

We find the following definition useful for our subsequent
derivations.

\begin{definition}
Let $\kappa$ be a permutation operating on a vector $\ve{v}$,
resulting in a vector $\ve{t}=\kappa(\ve{v})$. We define the
$\kappa$-permutation of an $m\times n$ matrix $\ve{V}$ as a
matrix obtained by permuting each row of $\ve{V}$ according to
$\kappa$. In this setting, $\ve{T}=\kappa(\ve{V})$ is used to
denote $\ve{T}(\mu,:)=\kappa(\ve{V}(\mu,:))$, $\mu=1,\dots,m$,
where $\ve{X}(\mu,:)$ represents the $\mu$-th row of the matrix
$\ve{X}$.
\end{definition}

The following straightforward results provide a partial
characterization of the set of permutations $\mathcal{P}$ of
cyclic codes. All proofs rely on the fact that $\alpha^j\, \theta
=\theta \, \alpha^{j'}$, for any integer $j$ and some integer
$j'$, and for $\theta \in \mathcal{A}(n)$, the affine group of
order $n$.

\begin{lemma} A necessary and sufficient condition for
\begin{eqnarray*}
\theta(\alpha^{j}(\cog_{a}))&=&\alpha^{j'}(\theta(\cog_{a})),\\
&&\; \text{for all} \; j\in\{0,\dots,n-1\},\\ &&\;\text{and
some}\; j'\in\{0,\dots,n-1\},
\end{eqnarray*}
to hold is that $\theta(\cdot )$ is an affine permutation, defined
as
\begin{eqnarray*}
\theta:&& i \to q \cdot i + \omega\, \mymod n,\\
&&\text{for some} \;\, q\in\{0,\dots,n-1\},\, \omega
\in\{0,\dots,n-1\}, \text{and}\; i=0,\dots, n-1,
\end{eqnarray*}
such that ${\mathrm{gcd}}(q,n)=1$. Here, ${\mathrm{gcd}}(q,n)$ is used to denote the
greatest common divisor of $q$ and $n$.
\label{lemma:affine_permutation}
\end{lemma}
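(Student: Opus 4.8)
The plan is to prove the biconditional by first showing that affine permutations satisfy the stated commutation relation, and then arguing that no non-affine permutation can. For the sufficiency direction, suppose $\theta : i \mapsto q\cdot i + \omega \pmod n$ with $\gcd(q,n)=1$. I would compute the composite maps directly as index transformations. Recall $\alpha^{j}$ sends $i \mapsto i+j \pmod n$. Then $\theta \circ \alpha^{j}$ sends $i \mapsto q(i+j)+\omega = qi + qj + \omega \pmod n$, whereas $\alpha^{j'}\circ\theta$ sends $i \mapsto (qi+\omega)+j' = qi + \omega + j' \pmod n$. These two affine maps in $i$ agree for all $i$ precisely when $qj \equiv j' \pmod n$; since $\gcd(q,n)=1$ this choice of $j'$ exists (and is unique) for every $j$, establishing the commutation identity. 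The only subtle point is being careful that the identity is asserted as an action on codewords $\cog_a$, but since the action on vectors is by permuting coordinates according to the index map, equality of the index maps immediately gives equality of the vector actions.

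For the necessity direction, I would argue contrapositively: suppose $\theta$ is a permutation of $\{0,\dots,n-1\}$ for which the relation $\theta(\alpha^j(\cog_a)) = \alpha^{j'}(\theta(\cog_a))$ holds for all $j$ (for suitable $j'$ depending on $j$), and deduce that $\theta$ must be affine. Writing the relation at the level of index maps (which is legitimate since the cog generates a parity-check matrix of full row-rank $n-k$, so its orbit under cyclic shifts spans enough structure — here I may need to be slightly careful and perhaps phrase the claim at the level of the underlying coordinate permutations rather than a single vector, or invoke that the period of the cog is $n$), the condition becomes: for every $j$ there is $j'$ with $\theta(i+j) = \theta(i) + j'$ for all $i$. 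Setting $i=0$ gives $j' = \theta(j) - \theta(0)$, so the condition reads $\theta(i+j) - \theta(0) = (\theta(i)-\theta(0)) + (\theta(j)-\theta(0))$ for all $i,j$. Thus $\phi(i) := \theta(i)-\theta(0)$ is an additive (group-homomorphism) map on $\setZ/n\setZ$; every such endomorphism is multiplication by a fixed constant $q = \phi(1)$, and since $\theta$ (hence $\phi$) is a bijection, $q$ must be a unit, i.e.\ $\gcd(q,n)=1$. Then $\theta(i) = qi + \omega$ with $\omega = \theta(0)$, which is exactly an affine permutation.

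I would then close with the observation that this matches the general principle stated in the paper, namely $\alpha^j\theta = \theta\alpha^{j'}$ for $\theta$ in the affine group $\mathcal{A}(n)$ — the computation above is precisely the verification of that fact together with its converse. The main obstacle I anticipate is the passage between "the relation holds for the specific vector $\cog_a$" and "the relation holds as an identity of coordinate permutations". This is where the hypothesis that $\cog_a$ has period $n$ (every cyclic shift is distinct) is essential: it guarantees that the $n$ shifts $\alpha^j(\cog_a)$ are all different, so knowing how $\theta$ interacts with each shift pins down the coordinate permutation up to the ambiguity in $j'$, which is exactly what the argument needs. If one wanted to be fully rigorous one might instead read Lemma~\ref{lemma:affine_permutation} as a statement purely about permutations $\alpha,\theta$ of the index set $\{0,\dots,n-1\}$ with the cog merely serving as notational scaffolding; under that reading the proof is the clean two-line computation plus the endomorphism-of-$\setZ/n\setZ$ argument above, with no obstacle at all.
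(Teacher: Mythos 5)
Your proof is correct and follows essentially the same route as the paper's: both reduce the vector-level identity to the coordinate-level condition $\theta(i+j)\equiv\theta(i)+j' \pmod{n}$ and identify its solutions with the affine maps $i\mapsto q\cdot i+\omega$, $\gcd(q,n)=1$. You merely fill in what the paper asserts without detail (the direct computation for sufficiency and the $\setZ/n\setZ$-endomorphism argument for necessity), and your caveat about passing from the action on the single vector $\cog_a$ to an identity of coordinate permutations makes explicit a rewriting step the paper performs silently.
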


\begin{proof}
The claim of Lemma \ref{lemma:affine_permutation} can be rewritten as
\begin{eqnarray*}
\theta(i+j)&=&\theta(i)+j'\mymod n\\&& j\in\{0,\dots,n-1\},\,
j'\in\{0,\dots,n-1\},\,i=0,\dots,n-1,
\end{eqnarray*}
where $\theta(i)$ denotes the action of the permutation $\theta$
on the coordinate $i$.

The former equality is true if and only if $\theta(i)$ is a
linear function of the form $q\cdot i +\omega$ for which ${\mathrm{gcd}}(q,n)=1$.
\end{proof}
The lemma asserts that cyclic permutations commute (up to a cyclic
shift) with all affine permutations in a symmetric group.

\begin{example}
  Consider $n=5$, $j=1$, and $\theta:\, i \to i+2 \mymod\,n,\;
  i=0,\ldots, n-1$.  Then,
  $\theta(\alpha(10100))=\alpha(\theta(10100))=10010$ holds. In this
  special case, both $j$ and $j'$ are equal to one.
\end{example}

\begin{claim} If $\cog_{b}=\alpha^{j}(\cog_{a})$, for some $j\in\{0,\dots,
n-1\}$, then
\[
\stopsets{a}=\stopsets{b},\, \; \text{for all}\; \, \sigma\leq n.
\]
\end{claim}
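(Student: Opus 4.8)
The plan is to exploit the fact that, under the paper's standing convention, $\ve{H}_a$ and $\ve{H}_b$ are \emph{square} matrices ($m=n$) built from cogs of period $n$. Consequently the rows of $\ve{H}_a$ are precisely the $n$ distinct cyclic shifts of $\cog_a$, and the rows of $\ve{H}_b$ are the $n$ distinct cyclic shifts of $\cog_b$. Since $\cog_b=\alpha^{j}(\cog_a)$ is itself one of the cyclic shifts of $\cog_a$, applying $\alpha^{j}$ and then forming all $n$ shifts produces exactly the same collection of row vectors as forming all $n$ shifts of $\cog_a$; only the order changes (the $i$-th row of $\ve{H}_a$ becomes the $(i-j\bmod n)$-th row of $\ve{H}_b$). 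Hence $\ve{H}_b$ is obtained from $\ve{H}_a$ by a mere permutation of rows.

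Next I would invoke the definition of a stopping set: a set ${\mathcal{I}}\subseteq{\mathcal{J}}$ is a stopping set of a parity-check matrix $\ve{H}$ if and only if the restriction of $\ve{H}$ to ${\mathcal{I}}$ contains no row of Hamming weight one. This property depends only on the \emph{set} of rows of $\ve{H}$, not on their ordering: permuting the rows of $\ve{H}$ merely permutes the rows of the restriction and leaves every row weight unchanged, so a weight-one row is present in the one restriction precisely when it is present in the other. Therefore $\mathcal{S}(\ve{H}_a)=\mathcal{S}(\ve{H}_b)$ as families of subsets of ${\mathcal{J}}$; intersecting with the subsets of a fixed size $\sigma$ shows that the sub-families of stopping sets of size $\sigma$ coincide as well, and in particular $\stopsets{a}=\stopsets{b}$ for every $\sigma$ (the restriction $\sigma\le n$ being automatic, since ${\mathcal{I}}\subseteq{\mathcal{J}}$ forces $|{\mathcal{I}}|\le n$).

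There is essentially no real obstacle here — the statement serves as a warm-up for the substantive affine-permutation results — but the one point that must be made explicit is the reduction to the case $m=n$: for a genuinely row-redundant cyclic matrix with $m<n$, the rows of $\ve{H}_a$ and $\ve{H}_b$ would be different length-$m$ windows of shifts of $\cog_a$, and the row-set argument would break down. Under the convention that square parity-check matrices are used throughout, this issue does not arise. Alternatively, one may view the claim as the degenerate case $\theta=\mathrm{id}$ of the commutation identity $\alpha^{j}\theta=\theta\alpha^{j'}$ that underlies Lemma~\ref{lemma:affine_permutation}, since then the $\kappa$-permutation relating the two matrices is itself a cyclic shift.
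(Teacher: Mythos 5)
Your proof is correct and follows essentially the same route as the paper: the matrix built from $\alpha^{j}(\cog_a)$ is just a row permutation (cyclic reordering) of $\ve{H}_a$, and stopping sets depend only on the set of rows, so all counts $\stopsets{a}$ are preserved. Your explicit remark that this argument rests on the square-matrix convention $m=n$ (with period-$n$ cogs) is a useful clarification that the paper's one-line proof leaves implicit.
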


\begin{proof}
Applying $\alpha^j$ to $\cog_{a}$ cyclically permutes the rows of
$\ve{H}_{a}$. This cyclic permutation preserves all stopping sets,
which proves the claimed result.
\end{proof}
\begin{claim} If $\cog_{c}=\beta^{j}(\cog_{a})$, for some $j\in\{0,\dots,h-1\}$, then
\[
\stopsets{a}=\stopsets{c}, \, \; \text{for all}\; \, \sigma\leq n.
\]
\label{claim:permutation_beta}
\end{claim}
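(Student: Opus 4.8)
The plan is to reduce Claim~\ref{claim:permutation_beta} to the kind of row-permutation argument already used for Claim~1 (the $\alpha^j$ case), by observing that $\beta$ is a special case of the affine permutations covered by Lemma~\ref{lemma:affine_permutation}. Indeed, $\beta:\,i\to 2i\mymod n$ is of the form $q\cdot i+\omega$ with $q=2$ and $\omega=0$, and since $n$ is odd (for a cyclic code of odd length, $2$ is invertible mod $n$, which is exactly the regime in which $P_2$ is defined), we have $\mathrm{gcd}(2,n)=1$. Hence $\beta\in\mathcal{A}(n)$ and Lemma~\ref{lemma:affine_permutation} applies to $\theta=\beta^j$ for every $j$.

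First I would spell out what happens to the matrix $\ve{H}_a$ under the $\beta^j$-permutation of its columns in the sense of the penultimate definition: $\beta^j(\ve{H}_a)$ is the $m\times n$ matrix whose $\mu$-th row is $\beta^j(\ve{H}_a(\mu,:))$. By construction, row $\mu$ of $\ve{H}_a$ is $\alpha^{\mu-1}(\cog_a)$, so row $\mu$ of $\beta^j(\ve{H}_a)$ is $\beta^j(\alpha^{\mu-1}(\cog_a))$. Lemma~\ref{lemma:affine_permutation} with $\theta=\beta^j$ gives an integer $j'$ (depending only on $j$, not on $\mu$) such that $\beta^j(\alpha^{\mu-1}(\cog_a))=\alpha^{j'(\mu-1)\,\text{-ish shift}}(\beta^j(\cog_a))$ --- more precisely $\beta^j\alpha^{\mu-1}=\alpha^{(\mu-1)j'}\beta^j$ as permutations, so the $\mu$-th row of $\beta^j(\ve{H}_a)$ equals $\alpha^{(\mu-1)j'}(\cog_c)$ where $\cog_c=\beta^j(\cog_a)$. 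Since the cogs under consideration have period $n$ and $\mathrm{gcd}(j',n)=1$ (as $\beta^j$ is invertible), the rows $\{\alpha^{(\mu-1)j'}(\cog_c):\mu=1,\dots,n\}$ are exactly the $n$ distinct cyclic shifts of $\cog_c$, i.e.\ the rows of $\ve{H}_c$ up to reordering. Therefore $\beta^j(\ve{H}_a)$ is obtained from $\ve{H}_c$ by a permutation of rows only.

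Next I would invoke the column-permutation invariance of stopping sets: a column permutation $\pi$ carries stopping sets of a matrix $\ve{H}$ bijectively onto stopping sets of $\pi(\ve{H})$ of the same size, because the defining property (no row of Hamming weight one in the restriction to $\mathcal{I}$) is preserved under relabeling columns --- the restriction of $\pi(\ve{H})$ to $\pi(\mathcal{I})$ is a column-rearrangement of the restriction of $\ve{H}$ to $\mathcal{I}$ and hence has the same multiset of row weights. Consequently $\stopsets{a}=|\mathcal{S}_\sigma(\beta^j(\ve{H}_a))|$ for all $\sigma\le n$. Combining this with the previous paragraph --- that $\beta^j(\ve{H}_a)$ and $\ve{H}_c$ differ only by a row permutation, which trivially preserves the set of stopping sets since it does not touch the columns at all --- yields $\stopsets{a}=\stopsets{c}$ for all $\sigma\le n$, as desired.

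The step I expect to require the most care is the commutation identity $\beta^j\alpha^{\mu-1}=\alpha^{(\mu-1)j'}\beta^j$ and, relatedly, checking that everything goes through at the level of genuine index permutations modulo $n$ rather than only up to the ad hoc description of $\beta$ as an interleaver. This is where Lemma~\ref{lemma:affine_permutation} does the real work: it is precisely the statement that the affine map $\beta^j$ conjugates the cyclic generator $\alpha$ to a power of itself, with $j'$ equal to the multiplicative part of $\beta^j$, i.e.\ $j'\equiv 2^j \pmod n$. One should also note explicitly that $\gcd(2^j,n)=1$ so that $\alpha^{j'}$ is again a generator of the cyclic group, guaranteeing the row set is a full set of shifts; and that for extended cyclic codes the extra coordinate is fixed by $\beta$, so the argument restricts to the cyclic part verbatim. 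None of this is deep, but it is the one place where a sloppy argument could hide an off-by-one or a missing coprimality hypothesis.
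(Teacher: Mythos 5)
Your argument is correct and is essentially the paper's own proof: both use that $\beta^j$ is an affine permutation ($q=2$, $\omega=0$) which, by Lemma~\ref{lemma:affine_permutation}, commutes with cyclic shifts up to a change of shift index, so that $\ve{H}_c$ and $\ve{H}_a$ differ only by row and column permutations, under which stopping-set counts are invariant. Your write-up merely makes explicit a few details the paper leaves implicit (that $j'\equiv 2^j\pmod n$ is coprime to $n$, that the period-$n$ assumption gives all $n$ distinct shifts, and the column-permutation invariance of stopping sets).
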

\begin{proof}
It is straightforward to see that
\begin{eqnarray*}
\ve{H}_{c}(\mu,:)&=&\alpha^{\mu-1}(\ve{H}_{c}(1,:))=\alpha^{\mu-1}
(\beta^{j}(\ve{H}_{a}(1,:)))\\&=&\beta^j(\alpha^{\mu'-1}(\ve{H}_{a}(1,:))),
\end{eqnarray*}
since $\beta^j$ is an affine
permutation with $q=2$ and $\omega=0$. As a result, $\ve{H}_{c}$
can be transformed into $\ve{H}_{a}$ through row- and
column-permutations.
\end{proof}
%
%

We conclude that $\mathcal{A}(n) \cap \Aut(\mathcal{C}) \subseteq
\mathcal{P}$: in other words, applying affine transforms from the
automorphism group of the code to one chosen cog in the family
produces cogs that generate parity-check matrices with identical
stopping set distributions. Therefore, the MBBP decoder does not
have to store all redundant bases, but rather a set of
permutations, along with a low number of cog vectors that are known to have good
stopping set properties. Note that this is a desirable property
for practical applications: in order to generate a set of matrices
with almost identical and good decoding performance, one only
needs to determine the stopping set properties of the cog families
rather than that of all individual cogs.

Let $\hat{f}$ denote the index of the optimal or near-optimal
family. One vector $\cog_{\hat{l}}$,
$\hat{l}\in{\mathcal{F}}_{\hat{f}}$, and a subset of permutations in
$\Aut({\mathcal{C}})$ suffice to generate a set of cogs from
${\mathcal{F}}_{\hat{f}}$. Depending on the set of available
permutations, multiple cogs may be required to generate all cogs from
the family ${\mathcal{F}}_{\hat{f}}$.  With all these cogs at hand, one can construct the
matrices $\ve{H}_{\ell}$, $\ell\in{\mathcal{F}}_{\hat{f}}$, required
for MBBP decoding, by cyclically shifting the corresponding cogs,
cf.\ Section~\ref{sec:definitions}.

For some classes of cyclic codes, stronger results are available
on the structure of the automorphism group. For example, it is
known that the automorphism group of a primitive BCH code contains
the affine group - and BCH cog families are invariant under the
action of every element of the affine group. On the other hand,
the family of binary extended quadratic residue (QR) codes
\cite{macwilliamsetal77} with lengths $p+1$, where $p$ is a prime,
is known to have automorphism groups that contain permutations
from the projective special linear group $PSL(p)$. For $p=8m\pm
1$, the elements of this group can be written as
\begin{equation*}
y \to \frac{a\,y+b}{c\,y+d}, \; a,b,c,d \in \mathbb{F}_p,\;
ad-bc=1.
\end{equation*}
A non-affine permutation in this group transforms cogs from one
family into cogs that belong to a \emph{different} family. As a
result, the image of the original cog generates a parity-check
matrix with different stopping set properties compared to the one
generated by the original cog.

\section{MBBP decoding}

\label{sec:mbbp_and_variations}

We describe next in more detail the operating principles of the
MBBP decoders, and modifications thereof - as depicted in
Figure~\ref{fig:parallel_decoder_structure} and
\ref{fig:parallel_decoder_structure_information_exchange},
respectively. The basic components of an MBBP decoder are
collections of (possibly redundant) parity-check matrices, and
logical units that combine and process outputs of BP decoders
operating on the matrices of the collection. We distinguish two
basic MBBP architectures: one, which allows information exchange
between decoders ({\emph{MBBP-X}}) and another, where the decoder
outputs are obtained without exchange of information
({\emph{MBBP-NX}}). The decoders in the former category have the
feature that information on the reliability of the received
symbols can be exchanged \emph{during} the process of iterative BP
decoding; decoders in the latter class can only combine their
results \emph{upon termination} of their individual decoding
processes. Both types of decoders can be implemented by storing a
set of parity-check matrices. However, if only matrices based on
one family of cogs are used, one needs to store only a low number of cogs,
along with a set of permutations from ${\mathcal{P}}$.

The simplest architecture of an MBBP decoder is depicted in
Figure~\ref{fig:parallel_decoder_structure}, where the outputs of
individual decoders are jointly processed only at the end of the
decoding cycle. We refer to this technique as \emph{standard} MBBP
decoding (henceforth, MBBP-NX-S decoding).

\subsection{MBBP-NX decoding}

Standard MBBP decoding and its variation {\emph{First-success
MBBP decoding}} (MBBP-NX-FS) generate a collection of decoded words and then
perform an additional metric selection within this set of words.
The result of this processing is passed on to the
\emph{information sink}, which represents the gateway for the
final codeword estimate of the decoder.

\subsubsection{Standard MBBP decoding (MBBP-NX-S)}
\label{MBBP-NX-S}

The MBBP-NX-S decoder runs multiple BP decoders in parallel, each
of them separately and on a different parity-check matrix
representation of the code. Let the parity-check matrix
representation used by the $\ell$-th decoder be denoted by
$\ve{H}_{\ell}$, $\ell=1,\dots, l$, and its decoded vector after at most $N$ iterations by $\hat{\ve{c}}_{\ell}$,
$\ell=1,\dots, l$. We let ${\mathcal{V}}\subseteq\{1,\dots, l\}$
be the set of indices $\ell$ describing decoders that converged to
a valid codeword. If none of the decoders converged to a valid
codeword, we let ${\mathcal{V}}=\{1,\dots,l\}$.  The words
estimated by the decoders, $\hat{\ve{c}}_{v}$, $v\in{\mathcal{V}}$
are passed on to a {\emph{least metric selector}} (LMS) unit,
which determines the ``best'' codeword estimate using the decision
rule
$\hat{\ve{c}}=\argmax{v\in{\mathcal{V}}}\Pr\left\{\ve{Y}=\ve{y}
\mid\ve{C}=\hat{\ve{c}}_v\right\}$. It is well known that for the
AWGN channel, this equation can be rewritten~\cite{linetal04} as $
\hat{\ve{c}}=\argmin{v\in{\mathcal{V}}}\sum\limits_{\nu=0}^{n-1}
\left|y_{\nu}-\mapping(\hat{c}_{v,\nu})\right|^2$. In this
context, $\mapping(\cdot)$ defines the mapping of binary input
symbols into antipodal signals (i.e.\ the BPSK modulation performed by
the transmitter). The estimated information vector $\hat{\ve{u}}$
is obtained from $\hat{\ve{c}}$ in the standard manner. We choose
a generator matrix of systematic form what provides several
advantages, described in more detail in a companion paper
\cite{hehnetal08a}. Figure~\ref{fig:parallel_decoder_structure}
depicts the operation of the MBBP-NX-S decoder.

\begin{figure}[ht!]
\begin{center}
\psfrag{rec}[c][c][0.95]{\scriptsize{$\ve{y}$}}
\psfrag{success1}[l][l][0.95]{\scriptsize{$1\!\in\!{\mathcal{V}}$\,\,?}}
\psfrag{success2}[l][l][0.95]{\scriptsize{$2\!\in\!{\mathcal{V}}$\,\,?}}
\psfrag{successl}[l][l][0.95]{\scriptsize{$l\!\in\!{\mathcal{V}}$\,\,?}}
\psfrag{est}[c][c][0.95]{\scriptsize{$\hat{\ve{c}}$}}
\psfrag{BPD}[c][c][0.95]{{\scriptsize{BP-Dec.}}}
\psfrag{on}[c][c][0.95]{\scriptsize{{on}}}
\psfrag{c1}[l][l]{\scriptsize{$\hat{\ve{c}}_1$}}
\psfrag{c2}[l][l]{\scriptsize{$\hat{\ve{c}}_2$}}
\psfrag{cl}[l][l]{\scriptsize{$\hat{\ve{c}}_l$}}
\psfrag{Rep1}[c][c][0.95]{{\scriptsize{$\ve{H}_1$}}}
\psfrag{Rep2}[c][c][0.95]{\scriptsize{{$\ve{H}_2$}}}
\psfrag{Repl}[c][c][0.95]{{\scriptsize{$\ve{H}_l$}}}
\psfrag{dots}[c][c][0.95]{{\scriptsize{$\dots$}}}
\psfrag{ML}[c][c][0.95]{{\scriptsize{$\LMS$}}}
\psfrag{I}[r][r][0.95]{{\scriptsize{$N$}}}
\includegraphics[scale=0.45]{./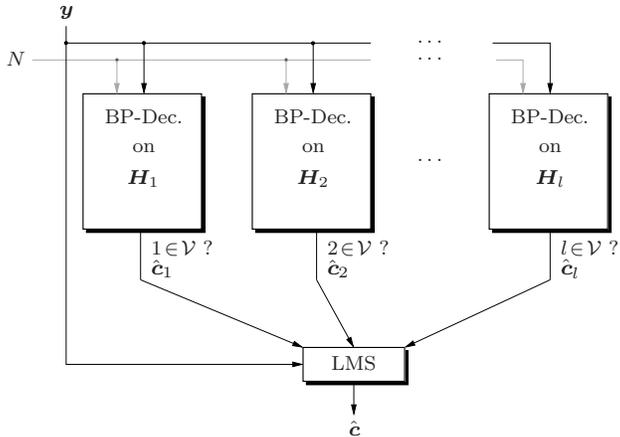}
\caption{MBBP-NX-S decoding}
\label{fig:parallel_decoder_structure}
\end{center}
\vspace{-0.6cm}
\end{figure}

\subsubsection{First-success MBBP decoding (MBBP-NX-FS)}

This type of MBBP decoder follows the standard approach in so far that
it runs multiple BP decoders separately, each on a different
parity-check matrix of the code. Denote the number of iterations
required by the $\ell$-th decoder to converge by $N_{\ell}$. As soon
as the first decoder, indexed by $\ell^{*}$, identifies a codeword,
the decoding process terminates. The estimate obtained by the decoder
indexed by $\ell^{*}$ is passed on to the information sink. In the
unlikely event that two or more decoders converge to a codeword after
the same number of iterations, one of the outputs is randomly chosen
from $\ve{c}_v$, where $v\in{\mathcal{V}}$ and where ${\mathcal{V}}$
is defined as for the MBBP-NX-S decoder. This approach requires only
$N_{\ell^{*}}=\min_{\ell}N_{\ell}$ iterations to converge, and has
therefore a significantly lower time-complexity when compared to
MBBP-NX-S.
 When considering the average number of iterations to decode one codeword, this effect shows in particular in the low signal-to-noise ratio (SNR) regime.

The steps of these algorithms are summarized in Algorithm
\ref{alg:MBBPNX}. Note that the tags of a selected set of steps
and commands in the table indicate the algorithm during which
these steps are executed. All untagged steps are executed both
during the MBBP-NX-S and the MBBP-NX-FS procedure.



Line $2$, $5$, and $6$ in Algorithm \ref{alg:MBBPNX} describe the
actions of detecting the first convergent decoder and terminating
the decoding process. Although it can be easily shown that the
MBBP-NX-FS algorithm cannot outperform its MBBP-NX-S counterpart,
it provides for significant time savings, which makes it amenable
for use in low-delay communication systems.

\begin{algorithm}[ht!]
\begin{algorithmic}[1]
\INPUT{$\ve{y}$, $\ve{H}_1,\dots,\ve{H}_{l}$, $N$}
\OUTPUT{$\hat{\ve{c}}$}
\STATE{${\mathcal{V}}:=\emptyset$, $i:=1$}
\WHILE{$i\leq N$ (NX-S, NX-FS) \textbf{and} ${\mathcal{V}}=\emptyset$ (NX-FS)}
\STATE{$\hat{\ve{c}}_{\ell}:={\mathrm{BPiteration}}(\ve{y},\ve{H}_{\ell})$,
$\ell\in\{1,\dots,l\}\backslash{\mathcal{V}}$}
\FOR{$\ell:=1,\dots,l$}
\IF{$\hat{\ve{c}}_{\ell}\cdot\ve{H}_{\ell}^{\transposed}=0$}
\STATE{${\mathcal{V}}:={\mathcal{V}}\cup\ell$}
\ENDIF
\ENDFOR
\STATE{$i:=i+1$}
\ENDWHILE
\IF{${\mathcal{V}}=\emptyset$}
\STATE{${\mathcal{V}}:=\{1,\dots,l\}$} \ENDIF
\STATE{$\hat{\ve{c}}:=\argmin{\ve{c}_v,\, v\in{\mathcal{V}}}
\sum\limits_{\nu=0}^{n-1}\left|y_{\nu}-\mapping(\hat{c}_{v,\nu}\right)|^2$\hspace*{\fill}(NX-S)}
\STATE{$\hat{\ve{c}}:=\randfun{\ve{c}_v,\,
v\in{\mathcal{V}}}$\hspace*{\fill}(NX-FS)}
\end{algorithmic}
\caption{\label{alg:MBBPNX}MBBP-NX: Standard (NX-S), and First-Success (NX-FS)}
\end{algorithm}

\subsection{MBBP-X decoding}

In this section we present MBBP approaches which allow for
\emph{periodic exchange} of information between decoders
\emph{during} iterative BP decoding. To this end, each decoder
performs independently a given number of iterations, $N_{p}$, and
afterwards exchanges information with other decoders only at
iterations indexed by $\iota\cdot N_{p}$, where $\iota\in\setN$.
The {\emph{intrinsic information}} of a given variable node
depends only on the channel output and is therefore equal for all
decoders. For this reason, the decoders exchange only
{\emph{extrinsic information}} about the variable nodes.
Figure~\ref{fig:parallel_decoder_structure_information_exchange}
depicts the basic architecture of an MBBP-X decoder. As before, $\hat{\ve{u}}$ can be obtained from $\hat{\ve{c}}$ in a straightforward manner.

To emphasize that the messages exchanged between decoders
represent extrinsic information, they are denoted by
$\Prob^{(\ext)}(\cdot)$. As part of their \emph{cooperation scheme}, the
decoders agree on the (extrinsic) probability values for each
variable node. Afterwards, each decoder replaces its own
information about a given variable node with the jointly derived
estimate of the decoders, then calculates the {\emph{a-posteriori
information}}, and resumes decoding. For the purpose of computing
the cooperative extrinsic probability, only a subset of active
decoders ${\mathcal{A}}_{\nu}$, $\nu=0,\dots, n-1$, is used. This
subset is selected in terms of a {\emph{soft-metric}} majority
vote, which is described in more detail below.

\begin{figure}[ht!]
\begin{center}
\psfrag{success1}[l][l][0.95]{\scriptsize{$1\!\in\!{\mathcal{V}}$\,\,?}}
\psfrag{success2}[l][l][0.95]{\scriptsize{$2\!\in\!{\mathcal{V}}$\,\,?}}
\psfrag{successl}[l][l][0.95]{\scriptsize{$l\!\in\!{\mathcal{V}}$\,\,?}}
\psfrag{rec}[c][c]{$\ve{y}$}
\psfrag{c1}[l][l]{\scriptsize{$\hat{\ve{c}}_1$}}
\psfrag{c2}[l][l]{\scriptsize{$\hat{\ve{c}}_2$}}
\psfrag{cl}[l][l]{\scriptsize{$\hat{\ve{c}}_l$}}
\psfrag{est}[c][c][0.95]{$\hat{\ve{c}}$}
\psfrag{BPD}[c][c][0.95]{{\scriptsize{BP-Dec.}}}
\psfrag{on}[c][c][0.95]{\scriptsize{{on}}}
\psfrag{Rep1}[c][c][0.95]{{\scriptsize{$\ve{H}_1$}}}
\psfrag{Rep2}[c][c][0.95]{\scriptsize{{$\ve{H}_2$}}}
\psfrag{Repl}[c][c][0.95]{{\scriptsize{$\ve{H}_l$}}}
\psfrag{dots}[c][c][0.95]{{\scriptsize{$\dots$}}}
\psfrag{I}[r][r][0.95]{{\scriptsize{$N$}}}
\psfrag{Bus}[c][c][0.95]{{\scriptsize{Bus}}}
\psfrag{activate}[l][l][0.95]{{\scriptsize{Activate gateway if ${\mathcal{V}}\not=\emptyset$}}}
\psfrag{rand}[c][c][0.95]{{\scriptsize{$\randfun{\ve{c}_v,\,v\in{\mathcal{V}}}$}}}
\includegraphics[scale=0.45]{./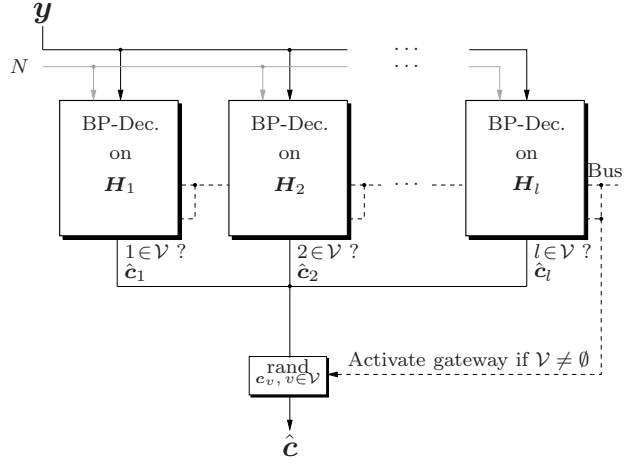}
\caption{MBBP-X decoder architecture. The bus is active only if
$i=\iota\cdot N_p$, $\iota\in\setN$.}
\label{fig:parallel_decoder_structure_information_exchange}
\end{center}
\end{figure}

\subsubsection{Probability-Averaging MBBP (MBBP-X-PA)}

This MBBP variation consists of the same basic steps as the
standard MBBP algorithm, except for a periodic intra-decoder
probability averaging feature. Probabilistic averaging was first
described by the authors in~\cite{laendneretal05}, in the context
of standard BP decoding. There, it was shown that it can lead to
significant reductions in the error-floor of the performance
curve. Here, averaging is used in a different context:
periodically, the parallel decoders update their extrinsic
probabilities by setting them to the average probability of a
subset of the best performing decoders.

For this purpose, a soft-metric, weighted majority-vote
$\bar{v}_{\nu}$ is calculated for a variable node $\nu$ according
to

\begin{equation}
\bar{v}_{\nu}=\sum\limits_{\ell=1}^{l}\log\left(\frac{\Prob^{(\ext)}(c_{\ell,\nu}=0\mid\ve{H}_{\ell},\ve{y})}
{\Prob^{(\ext)}(c_{\ell,\nu}=1\mid\ve{H}_{\ell},\ve{y})}\right),\, \nu=0,\dots,n-1.
\label{eq:soft_metric_majority_vote}
\end{equation}

The subset of decoders used in the described averaging process is
defined as


\begin{equation}
\mathcal{A}_{\nu}=\left\{\ell|\sgn(\bar{v}_{\nu})=
\sgn\left(\log\left(\frac{\Prob^{(\ext)}(c_{\ell,\nu}=0\mid\ve{H}_{\ell},\ve{y})}
{\Prob^{(\ext)}(c_{\ell,\nu}=1\mid\ve{H}_{\ell},\ve{y})}\right)\right)\right\},\,
\nu=0,\dots,n-1, \label{eq:set_of_participating_decoders}
\end{equation}
where $\sgn(\cdot)$ denotes the sign function.

The averaged probabilities are determined as
\begin{equation}
\Prob^{(\ext)}(c_{\nu}=0)=\frac{1}{\left|\mathcal{A}_{\nu}\right|}\sum\limits_{\ell'\in\mathcal{A}_{\nu}}
\Prob^{(\ext)}(c_{\ell',\nu}=0\mid\ve{H}_{\ell'},\ve{y}),\, \nu=0,\dots,n-1,
\label{eq:averaging_1}
\end{equation}
and
\begin{equation}
\Prob^{(\ext)}(c_{\nu}=1)=1-\Prob^{(\ext)}(c_{\nu}=0),\, \nu=0,\dots,n-1.
\label{eq:averaging_4}
\end{equation}

We point out that the performance of this decoding approach
strongly depends on the specific implementation of the proposed steps, as already pointed out in \cite{laendneretal05}.





\subsubsection{Highest-Reliability MBBP (MBBP-X-HR)}

This approach is a simple modification of the MBBP-X-PA technique,
and it employs the weighted majority-vote introduced in Equation
(\ref{eq:soft_metric_majority_vote}) to select the decoders used for
computing the information update subsequently forwarded to all
decoders.

In this approach, the MBBP decoding architecture selects for each variable node
an information update which is the most reliable output
among all decoders, i.e.
\begin{equation}
\Prob^{(\ext)}(c_{\nu}=0)=\Prob^{(\ext)}(c_{\ell^*,\nu}=0\mid \ve{H}_{\ell^*},\ve{y})
\label{eq:highest_reliability}
\end{equation}
with $\ell^*=\argmax{\ell'\in{\mathcal{A}}_{\nu}}\,\,
|\Prob^{(\ext)}(c_{\ell',\nu}=0\mid \ve{H}_{\ell'},\ve{y})-0.5|,\,
\nu=0,\dots,n-1$, and with $A_{\nu}$ given according to Equation
(\ref{eq:set_of_participating_decoders}). Note that
$\Prob^{(\ext)}(c_{\nu}=1)$ is calculated according to
Equation~(\ref{eq:averaging_4}).

\subsubsection{Information-Combining MBBP (MBBP-X-IC)}

The optimal method for deciding on the value of a random variable
when multiple \emph{independent} noisy observations of a
variable are given is {\emph{information combining}}
\cite{landetal06}.

The extrinsic information provided by the parallel decoders
depends on their underlying parity-check matrices and the received
codeword $\ve{y}$, and is hence not independent. Nevertheless,
without assuming any optimality properties, we propose an MBBP
decoding architecture which exchanges joint information about the
variable nodes and determines this value by information combining.
Information combining is also performed separately for each
variable node as part of standard BP, and again only the decoders
which agree with the soft-metric majority-vote according to
Equation (\ref{eq:soft_metric_majority_vote}) are considered. In
other words, only representations that have an index that belongs
to the set ${\mathcal{A}}_{\nu}$, defined as in Equation
(\ref{eq:set_of_participating_decoders}), are used in this
approach.

The ``combined'' probability distribution of the binary variable
$X_{\nu}$ is given by


\begin{equation}
\Prob^{(\ext)}(c_{\nu}=0)=\frac{\prod\limits_{\ell'\in{\mathcal{A}}_{\nu}}\Prob^{(\ext)}(c_{\ell',\nu}=0\mid \ve{H}_{\ell'},\ve{y})}
{\prod\limits_{\ell'\in{\mathcal{A}}_{\nu}}\Prob^{(\ext)}(c_{\ell',\nu}=0\mid \ve{H}_{\ell'},\ve{y})
+\prod\limits_{\ell'\in{\mathcal{A}}_{\nu}}\Prob^{(\ext)}(c_{\ell',\nu}=1\mid \ve{H}_{\ell'},\ve{y})},\, \nu=0,\dots,n-1.
\label{eq:info_comb_formula}
\end{equation}
Again, one can use Equation~(\ref{eq:averaging_4}) to calculate $\Prob^{(\ext)}(c_{\nu}=1)$.

A summary of the steps for the three algorithms from the class of
MBBP-X decoders is given in Algorithm \ref{alg:MBBPX}. As before, the
untagged lines are executed in all three algorithms, while the
tags of the lines specify if the step is to be used in the X-PA,
X-IC, or X-HR variation of MBBP decoding.

\begin{algorithm}[ht!]
\begin{algorithmic}[1]
\INPUT{$\ve{y}$, $\ve{H}_1,\dots,\ve{H}_{l}$, $N$, $N_p$}
\OUTPUT{$\hat{\ve{c}}$} \STATE{${\mathcal{V}}:=\emptyset$, $i:=1$}
\STATE{$\Prob^{(\ext)}(c_{\nu}\hspace{-0.1cm}=\hspace{-0.1cm}0\hspace{-0.1cm}\mid\hspace{-0.1cm}\ve{H}_{\ell},\ve{y})=0.5$,
$\nu=0,\dots,n-1$, $\ell=1,\dots,l$} \WHILE{$i\leq N$ and
${\mathcal{V}}=\emptyset$}
\STATE{$[\hat{\ve{c}}_{\ell},\Prob^{(\ext)}(c_{\ell,\nu}=0\mid\ve{H}_{\ell},\ve{y})]:={\mathrm{BPiteration}}(\ve{y},\ve{H}_{\ell},\Prob^{(\ext)}(c_{\nu}\hspace{-0.1cm}=\hspace{-0.1cm}0\hspace{-0.1cm}\mid\hspace{-0.1cm}
\ve{H}_{\ell},\ve{y}))$, $\nu=0,\dots,n-1$, $\ell=1,\dots,l$}
\IF{$i=\iota\cdot N_p$, $\iota\in\setN$} \STATE{Apply Eq.\
(\ref{eq:soft_metric_majority_vote}) $\rightarrow\bar{v}_{\nu}$}
\STATE{Apply Eq.\ (\ref{eq:set_of_participating_decoders})
$\rightarrow{\mathcal{A}}_{\nu}$} \STATE{Apply Eqs.\
(\ref{eq:averaging_1}) and (\ref{eq:averaging_4})
$\rightarrow\Prob^{(\ext)}(c_{\nu}=0/1),\,
\nu=0,\dots,n-1$\hspace*{\fill}(X-PA) } \STATE{Apply Eq.\
(\ref{eq:highest_reliability})$\rightarrow\Prob^{(\ext)}(c_{\nu}=0/1),\,
\nu=0,\dots,n-1$\hspace*{\fill}(X-HR)} \STATE{Apply Eq.\
(\ref{eq:info_comb_formula})$\rightarrow\Prob^{(\ext)}(c_{\nu}=0/1),\,
\nu=0,\dots,n-1$\hspace*{\fill}(X-IC)}
\STATE{$\Prob^{(\ext)}(c_{\nu}\hspace{-0.1cm}=\hspace{-0.1cm}0\hspace{-0.1cm}\mid\hspace{-0.1cm}
\ve{H}_{\ell},\ve{y})\hspace{-0.1cm}:=\hspace{-0.1cm}\Prob^{(\ext)}(c_{\nu}=0)$,
$\nu=0,\dots,n-1$, $\ell=1,\dots,l$} \FOR{$\ell:=1,\dots,l$}
\IF{$\hat{\ve{c}}_{\ell}\cdot\ve{H}_{\ell}^{\transposed}=0$}
\STATE{${\mathcal{V}}:={\mathcal{V}}\cup\ell$}
\ENDIF
\ENDFOR
\ENDIF
\STATE{$i:=i+1$}
\ENDWHILE
\IF{${\mathcal{V}}=\emptyset$}
\STATE{${\mathcal{V}}:=\{1,\dots,l\}$}
\ENDIF
\STATE{$\hat{\ve{c}}:=\randfun{\ve{c}_v,\, v\in{\mathcal{V}}}$}
\end{algorithmic}
\caption{\label{alg:MBBPX}MBBP-X: Probability-Averaging (PA), Information-Combining (IC), and Highest-Reliability (HR)}
\end{algorithm}

\section{Results}
\label{sec:results}

As already pointed out, parity-check matrices that offer good
performance when used for decoding signals transmitted over the
BEC may also be good candidates for BP decoding over the AWGN
channel \cite{richardsonetal01a}. The same characteristic of
parity-check matrices was observed through extensive computer
simulations for the variants of MBBP decoders, used over the AWGN
channel.

In this section, we present performance results for MBBP decoders.
Whenever possible, we provide a comparison of the error rates of
MBBP decoders with those of a full search algorithm
(maximum-likelihood, ML, decoding). If performing full search is
computationally infeasible, but results on the weight distribution
$\{A_i\},i=0,1,\ldots,n$, of the underlying code are available, we
plot the union bounds for the $\BER$ and $\FER$ as given by the
expressions below: \
\[
\BER\leq \frac{1}{n}\sum_{i=d}^{n}A_{i}\cdot i \cdot
Q\left(\sqrt{2\frac{k}{n}i\frac{\Eb}{N_0}}\right),
\]

\[
\FER\leq \sum_{i=d}^{n}A_{i}\cdot
Q\left(\sqrt{2\frac{k}{n}i\frac{\Eb}{N_0}}\right).
\]
Here, it is tacitly assumed that the transmitted codeword is the
all-zero word, and $A_{i}$ is used to denote the number of
codewords of weight $i$ in the given code. Furthermore, we compare
our results to the {\emph{Gallager bound}} (random coding bound)
\cite{gallager68}. This bound considers an {\emph{ensemble}} of
block codes with length $n$ and rate $R$ and provides a tight
upper bound on the average $\FER$, denoted by $\Expect\{\FER\}$.
In this context, the expectation is taken over the ensemble of
codes. According to~\cite{gallager68}, the upper bound reads

\begin{equation}
\Expect\{\FER\}\leq \exp\left(-n \cdot \max_{0\leq\rho\leq 1}\max_{\Pr(\ve{X})} \left(E_0(\rho,10\log_{10}(E_{\mathrm{b}}/N_0),\Prob(\ve{X}))-\rho R\right)\right),
\label{eq:exponent_bound_ready}
\end{equation}
where $\rho$, $0\leq\rho\leq 1$, is a design parameter and $\Prob(\ve{X})$ denotes the probability vector of the possible channel inputs.
Let us specialize this bound for the AWGN channel, which has
discrete inputs and an SNR-dependent transition density function
$f_y(y\mid x_i)$. Here, $x_i$ is chosen from a finite set of
cardinality $M_x$ and $y$ is a continuous variable, and thus

\[
E_0(\rho,10\log_{10}(E_{\mathrm{b}}/N_0), \Prob(\ve{X}))=-\ln\left[\int\limits_{-\infty}^{\infty}
\left(\sum\limits_{i=1}^{M_x}[f_y(y\mid x_i)]^{\frac{1}{1+\rho}}\Prob(x_i)\right)^{1+\rho}\,{\mathrm{d}}y\right].
\]

We illustrate our findings on four short-to-moderate length codes;
these codes belong to the class of cyclic and extended cyclic codes,
and they include the $[24,12,8]$ extended Golay code, the $[47,24,11]$ quadratic residue (QR) code, the $[31,16,7]$-BCH code, as well as the
$[127,64,21]$-BCH code.
The maximum number of iterations is set to $N=100$ for all codes
and decoding approaches considered. Furthermore, $N_p$ is set to
$10$ whenever MBBP-X approaches are simulated.

Throughout the remainder of this
section, $l$ is used to denote the number of parallel BP decoders
in the MBBP architecture. For comparison, simulation results for standard BP decoding are presented as well. Also shown are the performance results of a BP decoder using the union of all parity-check equations in $\ve{H}_{\ell},\,\ell=1,\dots,l$, simultaneously, for the case of the $[31,16,7]$-BCH code.

\subsection{The $[24,12,8]$ Extended Golay Code}
\label{sec:golay}

For the purpose of MBBP decoding of the extended Golay code, we
use the result of Sections \ref{sec:definitions} and
\ref{sec:bases_selection}, and identify three different cog
families denoted by ${\mathcal{F}}_1$, ${\mathcal{F}}_2$, and
${\mathcal{F}}_3$. For this, and all subsequently considered
codes, one can generate cogs of a family by repeated application
of permutations of type $P_2$, shown in
Equation~(\ref{eq:perm_i_to_2i}). As an example, we will describe
the process of constructing the parity-check matrices for the
extended Golay code in more detail.

Since the code is an extended cyclic code, we construct the
parity-check matrices from each cog in terms of $23$ shifts
performed on positions $0$ to $22$ while keeping the last position
fixed.  For the $24$-th row of the parity-check matrix, we use the
all-one codeword: this codeword preserves the stopping set
distribution of the $23 \times 24$ matrix, and is the only parity
check of the $[24,12,8]$ extended Golay code invariant under all
affine permutations. It is worth pointing out that there are other
choices for the $24$-th parity-check equation that may lead to
slightly better overall performance - we use this parity-check
matrix for simplicity of analysis. As a performance criterion for
the cog families, we use the number of stopping sets up to size
$d=8$ in the parity-check matrices $\ve{H}_{\ell}$,
$\ell\in\{{\mathcal{F}}_1,{\mathcal{F}}_2,{\mathcal{F}}_3\}$.

We consider one representative cog for each family, termed
$\cog_{\mathcal{F}_f}$, $f=1,2,3$. A list of these cogs is given
below.

\begin{eqnarray*}
\cog_{\mathcal{F}_1}&=&1\,1\,0\,1\,0\,1\,0\,0\,1\,1\,0\,0\,1\,0\,0\,0\,0\,0\,0\,0\,1\,0\,0\,0\\
\cog_{\mathcal{F}_2}&=&1\,1\,1\,0\,0\,0\,0\,0\,1\,0\,0\,1\,1\,0\,0\,0\,0\,0\,1\,0\,0\,0\,0\,1\\
\cog_{\mathcal{F}_3}&=&1\,1\,0\,1\,0\,0\,1\,1\,0\,0\,0\,0\,0\,0\,0\,1\,0\,1\,0\,0\,1\,0\,0\,0
\end{eqnarray*}

Each $\cog_{\mathcal{F}_f}$, $f\in\{1,2,3\}$, allows for
generating all cogs in the family ${\mathcal{F}}_f$ by repeated
application of permutation $P_2$ to the first $23$ positions of
$\cog_{\mathcal{F}_f}$ while keeping the last position fixed, cf.\
Equation (\ref{eq:perm_i_to_2i}). The permutation $P_2$ is an
automorphism of an extended cyclic code and preserves the stopping
set distribution of the matrices $\ve{H}_{\ell}$,
$\ell\in{\mathcal{F}}_f$, cf.\ Claim~\ref{claim:permutation_beta}.
The number of stopping sets in $\ve{H}_{\ell}$,
$\ell\in\{{\mathcal{F}}_1,{\mathcal{F}}_2,{\mathcal{F}}_3\}$, is
summarized in Table
\ref{table:unresolved_stopping_sets_golay_24_12}. Note that the
matrices of the considered families differ significantly from the
near-optimal and highly redundant matrices used for decoding over the BEC, given in
\cite{schwartzetal06} and~\cite{hanetal07}. Due to the
high redundancy, these matrices are not amenable for
decoding over the AWGN channel, where short cycles may
significantly degrade the performance of iterative decoders.

\begin{table}
\begin{center}
\begin{tabular}{c|ccc}\hline
&\multicolumn{3}{c}{$\left|{\mathcal{S}}_{\sigma}(\ve{H}_{\ell})\right|$, $\ell\in$}\\
&${\mathcal{F}}_1$&${\mathcal{F}}_2$&${\mathcal{F}}_3$\\\hline
$\sigma\leq 5$&$0$&$0$&$0$\\
$\sigma=6$&$0$&$437$&$46$\\
$\sigma=7$&$1357$&$10143$&$1495$\\
$\sigma=8$&$25783$&$73209$&$20631$\\\hline
\end{tabular}
\end{center}
\caption{\label{table:unresolved_stopping_sets_golay_24_12}Number of
stopping sets for the $[24,12,8]$ extended Golay code, with
parity-check matrices $\ve{H}_{\ell}$,
$\cog_{\ell}\in{\mathcal{F}}_f$, $f=1,\dots,3$.}
\end{table}

The $[24,12,8]$ extended Golay code is self-dual and contains $759$ codewords of minimum
weight $8$ which can be partitioned into $33$ cyclic orbits.
Again, these cyclic orbits can be partitioned into three families
of equal size. In other words, each family ${\mathcal{F}}_f$,
$f=1,2,3$, contains $11$ cogs.

If matrices from the same family are used for signaling over the BEC,
they provide the same performance under iterative
decoding. Interestingly, the simulation results presented below show that the
same is true for decoders used over the AWGN channel.

Figure~\ref{fig:golay_24_12_oversized_parallel_vs_iterations_mode_11_decoders_all_families_all_one_row_100_iter_max_100_iter}
shows the $\BER$ performance of MBBP-NX-S decoding and MBBP-NX-FS
decoding as well as MBBP-X-PA, MBBP-X-HR, and MBBP-X-IC of the
extended Golay code. For all decoder types, $l=11$ parallel BP
decoders are used, for which the parity-check matrices are all
drawn from the same family. $\FER$ performance results show
similar characteristics. These results are not plotted due to space limitations.

\begin{figure}[ht!]
\begin{center}
\subfigure[$\BER$ performance of MBBP-NX approaches]{
\psfrag{BER}[cb][cb]{$\BER\,\rightarrow$}
\psfrag{SNR}[ct][ct]{$10\log_{10}(\Eb/N_0)\,\rightarrow$}
\psfrag{ber 24 12 fam 1}[l][l][1]{\scriptsize{$\ve{H}_{\ell}$, $\ell\in{\mathcal{F}}_1$}}
\psfrag{ber 24 12 fam 2}[l][l][1]{\scriptsize{$\ve{H}_{\ell}$, $\ell\in{\mathcal{F}}_2$}}
\psfrag{ber 24 12 fam 3}[l][l][1]{\scriptsize{$\ve{H}_{\ell}$, $\ell\in{\mathcal{F}}_3$}}
\psfrag{ber 24 12 BP}[l][l][1]{\scriptsize{BP}}
\psfrag{ber 24 12 mbbp-nx s}[l][l][1]{\scriptsize{MBBP-NX-S, $l=11$}}
\psfrag{ber 24 12 mbbp-nx iter}[l][l][1]{\scriptsize{MBBP-NX-FS, $l=11$}}
\psfrag{ber 24 12 MLSE}[l][l][1]{\scriptsize{ML decoding}}
\psfrag{ber 24 12 gallager bound}[l][l][1]{\scriptsize{Gallager bound}}
\includegraphics[scale=0.48]{./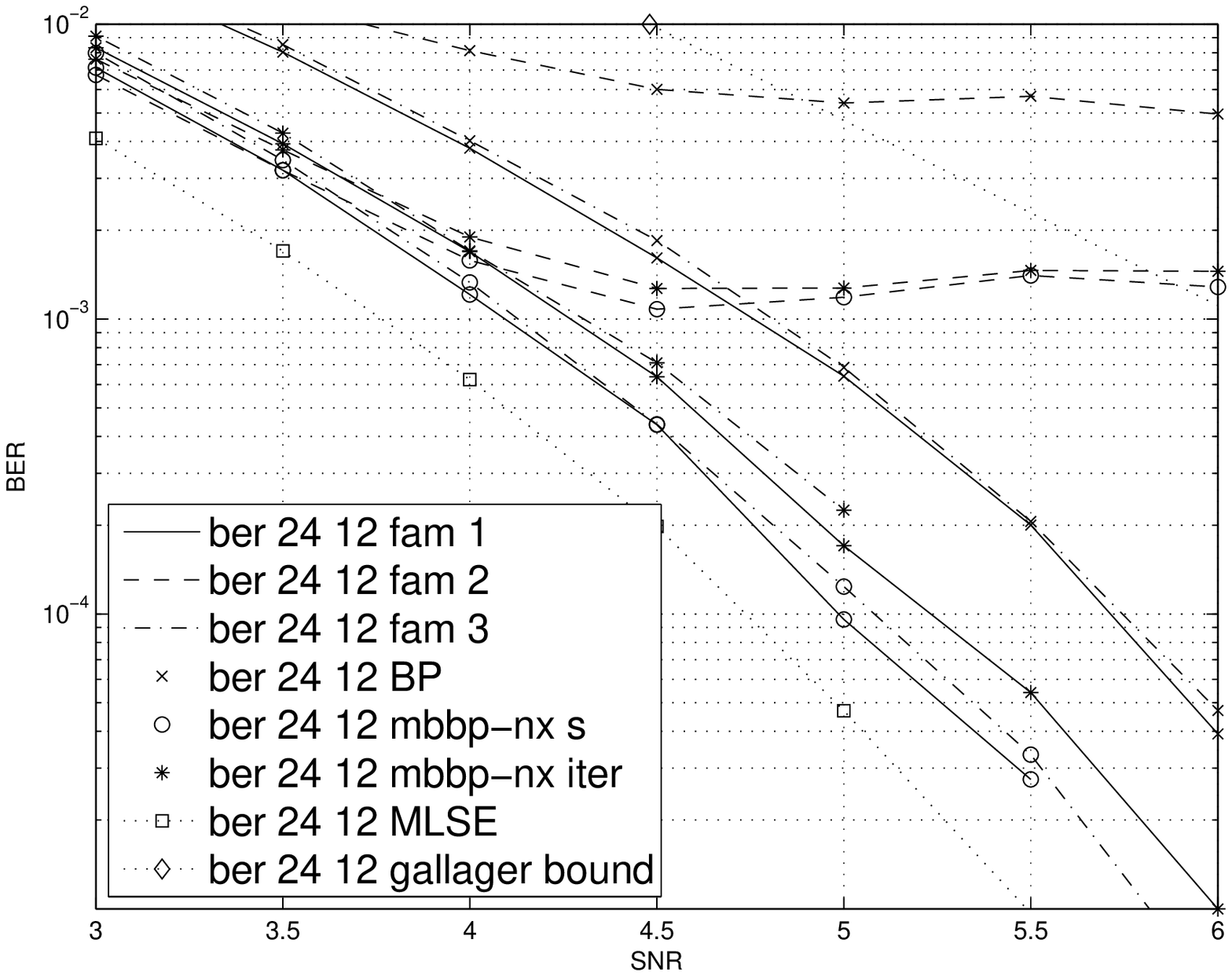}}
\subfigure[$\BER$ performance of MBBP-X approaches]{
\psfrag{BER}[cb][cb]{$\BER\,\rightarrow$}
\psfrag{SNR}[ct][ct]{$10\log_{10}(\Eb/N_0)\,\rightarrow$}
\psfrag{ber 24 12 fam 1}[l][l][1]{\scriptsize{$\ell\!\in\!{\mathcal{F}}_1$}}
\psfrag{ber 24 12 fam 2}[l][l][1]{\scriptsize{$\ell\!\in\!{\mathcal{F}}_2$}}
\psfrag{ber 24 12 fam 3}[l][l][1]{\scriptsize{$\ell\!\in\!{\mathcal{F}}_3$}}
\psfrag{ber 24 12 bp}[l][l][1]{\scriptsize{BP}}
\psfrag{ber 24 12 mbbp-x pa__}[l][l][1]{\scriptsize{MBBP-X-PA, $l=11$}}
\psfrag{ber 24 12 mbbp-x hr}[l][l][1]{\scriptsize{MBBP-X-HR, $l=11$}}
\psfrag{ber 24 12 mbbp-x ic}[l][l][1]{\scriptsize{MBBP-X-IC, $l=11$}}
\psfrag{ber 24 12 MLSE}[l][l][1]{\scriptsize{ML decoding}}
\psfrag{ber 24 12 gallager bound}[l][l][1]{\scriptsize{Gallager bound}}
\includegraphics[scale=0.48]{./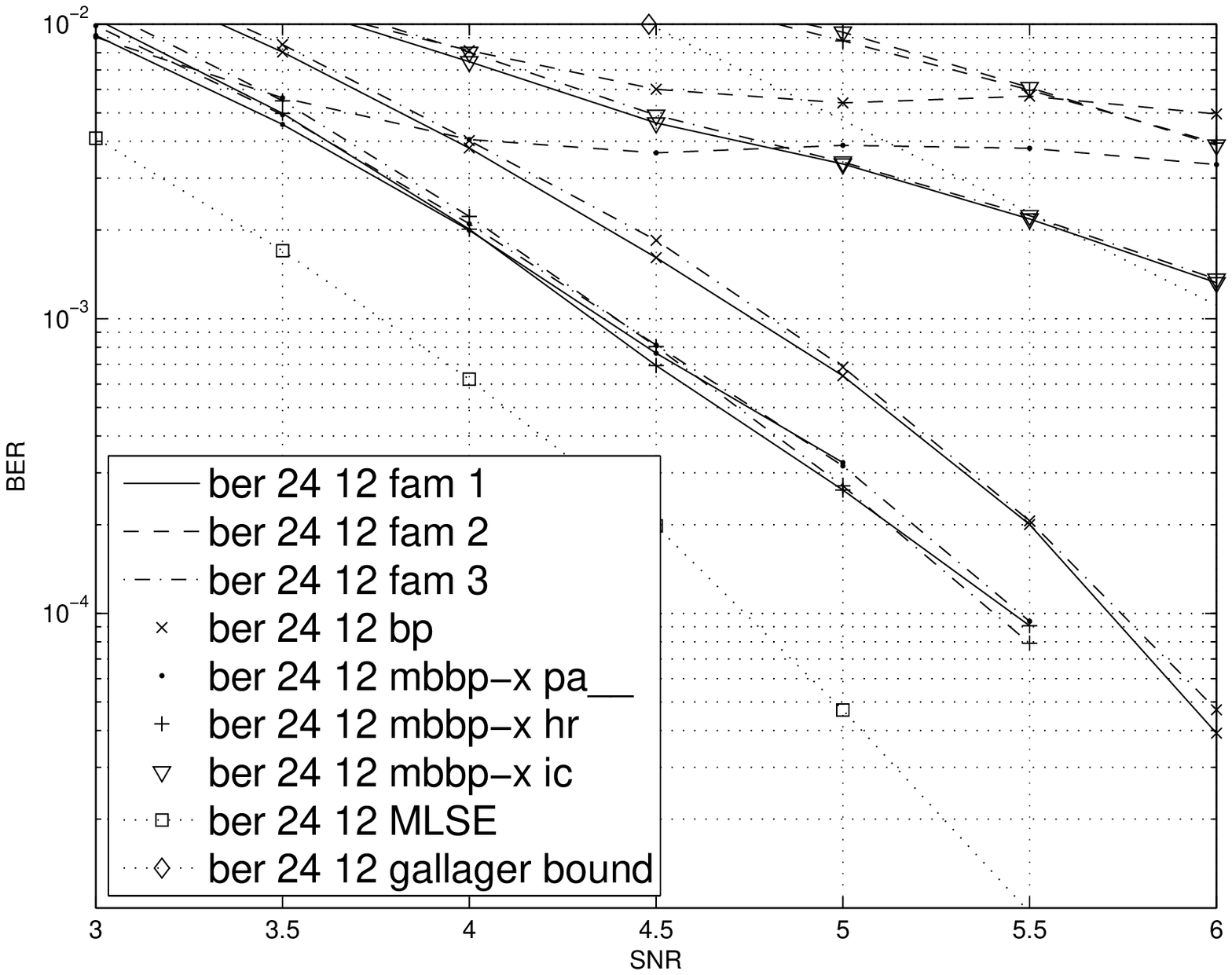}}
\caption{\label{fig:golay_24_12_oversized_parallel_vs_iterations_mode_11_decoders_all_families_all_one_row_100_iter_max_100_iter}Performance comparison for the $[24,12,8]$ extended Golay code using $\ve{H}_{\ell}$, $\ell\in{\mathcal{F}}_1$, ${\mathcal{F}}_2$, ${\mathcal{F}}_3$.}
\end{center}
\end{figure}

It can be observed for all approaches that MBBP decoders using
matrices $\ve{H}_{\ell}$, $\ell\in{\mathcal{F}}_1$, have the best
performance, followed by matrices $\ve{H}_{\ell}$,
$\ell\in{\mathcal{F}}_3$. This finding is supported by the stopping
set distribution of Table
\ref{table:unresolved_stopping_sets_golay_24_12}, showing that the
first family contains matrices that do not have stopping sets of size
up to six and a low number of stopping sets of size seven. Matrices
from the other two families have stopping sets of size six and exceed
the number of stopping sets of size seven of the first family. The
performance of MBBP decoders using parity-check matrices
$\ve{H}_{\ell}$, $\ell\in{\mathcal{F}}_2$, is
significantly worse than that of the two other classes - matrices in
this family have $437$ stopping sets of size six and over $10,000$
stopping sets of size seven.

The performance obtained by means of the MBBP-NX-S with $l=11$ is approximately $0.75$ dB better than standard BP and performs close to the ML decoding bound. 
When performing a direct comparison of MBBP-NX-S and MBBP-NX-FS
decoders, cf.\
Figure~\ref{fig:golay_24_12_oversized_parallel_vs_iterations_mode_11_decoders_all_families_all_one_row_100_iter_max_100_iter},
one can observe that MBBP-NX-FS follows the performance of MBBP-NX-S
very closely. However,
simulation results show a significant gap in the average number of
iterations required for successful decoding when comparing these
approaches. As the shapes of these curves show similar
characteristics, and due to space limitations, these results are
only plotted for the $[127,64,21]$-BCH code. It was observed for
all codes discussed within this paper that MBBP-NX-FS requires
significantly fewer iterations per decoder than MBBP-NX-S and BP decoding.

It is worth pointing out that the MBBP-X-PA and MBBP-X-HR
algorithms produce similar results for all possible choices of cog
families. Also, if the cogs are chosen from ${\mathcal{F}}_1$ or
${\mathcal{F}}_3$, the MBBP-X approaches outperform standard BP
decoding, but do not attain the performance of MBBP-NX decoders.
If the decoders operate on parity-check matrices constructed from
cogs in ${\mathcal{F}}_2$, very poor performance results and error
floors are observed in most of the cases. MBBP-X-IC decoders
perform very poorly, regardless of the family considered. A
probable cause for this phenomena is the strong correlation of the
data, which makes information combining techniques highly
suboptimal.

\subsection{$[31,16,7]$-BCH code}

The dual of the $[31,16,7]$-BCH code, denoted by
${\mathcal{B}}_1^{\perp}$, has minimum Hamming distance equal to
eight. The minimum-weight codewords in ${\mathcal{B}}_1^{\perp}$ can
be partitioned into $15$ cyclic orbits. The corresponding cogs
belong to one single family - as a result, all cog-based parity-check matrices of cyclic form with the same number of rows are expected to
have comparable performance.
In this special case we require $3$ cogs, all from the same family, to generate the $15$ possible cogs. A list of these cogs is given
below.

\begin{eqnarray*}
\cog_{\mathcal{F}_1,1}&=&1\,1\,1\,1\,0\,0\,0\,0\,1\,0\,0\,1\,1\,0\,0\,0\,0\,0\,0\,0\,0\,0\,0\,0\,0\,0\\
\cog_{\mathcal{F}_1,2}&=&1\,1\,0\,0\,0\,0\,1\,1\,0\,0\,0\,0\,0\,1\,0\,1\,0\,0\,0\,0\,0\,0\,0\,0\,0\,0\\
\cog_{\mathcal{F}_1,3}&=&1\,1\,1\,0\,0\,0\,0\,0\,0\,0\,0\,1\,0\,0\,1\,0\,1\,0\,0\,0\,0\,1\,0\,0\,1\,0
\end{eqnarray*}

Figure~\ref{fig:bch_31_16_oversized_parallel_vs_iterations_mode_6_decoders_100_iter_max_100_iter} visualizes the performance results for MBBP decoding of the $[31,16,7]$-BCH code.
We used $l=6$ decoders for the MBBP-NX, MBBP-X-PA, MBBP-X-HR, and
MBBP-X-IC algorithms.

\begin{figure}[ht!]
\begin{center}
\subfigure{
\psfrag{BER}[cb][cb]{$\BER\,\rightarrow$}
\psfrag{SNR}[ct][ct]{$10\log_{10}(\Eb/N_0)\,\rightarrow$}
\psfrag{ber 31 16 representations sim}[l][l]{\scriptsize{Stacked matrix, st.\ BP}}
\psfrag{ber 31 16 representations}[l][l]{\scriptsize{Standard BP decoding}}
\psfrag{ber 31 16 mbbp-nx s}[l][l]{\scriptsize{MBBP-NX-S, $l=6$}}
\psfrag{ber 31 16 mbbp-nx iter}[l][l]{\scriptsize{MBBP-NX-FS, $l=6$}}
\psfrag{ber 31 16 mbbp-x average}[l][l]{\scriptsize{MBBP-X-PA, $l=6$}}
\psfrag{ber 31 16 mbbp-x infocomb}[l][l]{\scriptsize{MBBP-X-IC, $l=6$}}
\psfrag{ber 31 16 mbbp-x maximum}[l][l]{\scriptsize{MBBP-X-HR, $l=6$}}
\psfrag{ber 31 16 MLSE union bound}[l][l]{\scriptsize{ML Union bound}}
\psfrag{ber 31 16 MLSE}[l][l]{\scriptsize{ML}}
\psfrag{ber 31 16 gallager bound}[l][l]{\scriptsize{Gallager bound}}
\includegraphics[scale=0.48]{./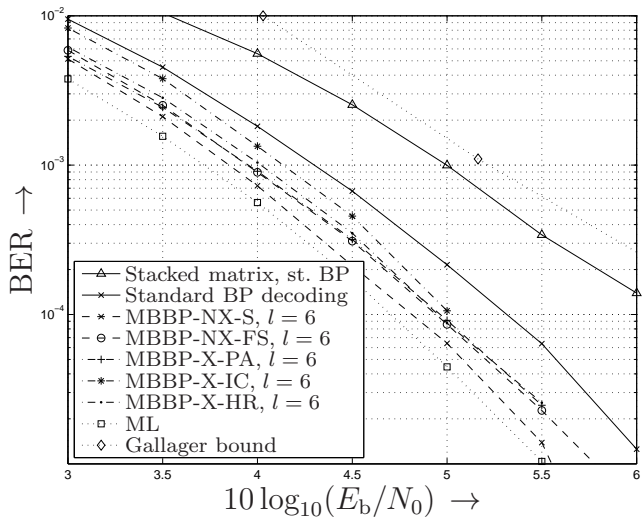}
}
\caption{\label{fig:bch_31_16_oversized_parallel_vs_iterations_mode_6_decoders_100_iter_max_100_iter}
Performance comparison for the
$[31,16,7]$-BCH code.}
\end{center}
\end{figure}

It can be observed that the MBBP-X techniques and the MBBP-NX-FS
decoders have inferior performance compared to the MBBP-NX-S
approach. Yet, these approaches require fewer iterations when more
decoders run in parallel, i.e.\ they usually operate with lower
complexity. It is worth mentioning that the NX-FS approach
exhibits better performance than standard BP decoding and requires
fewer iterations for successful decoding.

The performance curve labeled ``stacked''  corresponds to a BP
decoder operating on a parity-check matrix which contains the
union of parity checks present in $\ve{H}_{\ell}$, $\ell=1,\dots,
6$. Observe that this performance is significantly worse even when
compared to a standard BP decoder running on $\ve{H}_{\ell}$,
$\ell\in\{1,\dots, 6\}$. Reasons for this include the local cycle
distribution of the stacked matrix. All matrices $\ve{H}_{\ell}$,
$\ell=1,\dots, 6$, have a large number of short cycles, and the
stacked matrix has an even larger number of such cycles, and is
therefore not a good candidate for BP decoding.

\if 0
\subsection{$[127,113,5]$-BCH code}

Throughout this section, we denote the $[127,113,5]$-BCH code by
${\mathcal{B}}_2$. This code is a challenge for BP decoding as well as MBBP
decoding for two reasons. First, its dual,
${\mathcal{B}}_2^{\perp}$, only contains codewords of weight $56$
and higher, which leads to extremely dense parity-check matrices
for ${\mathcal{B}}_2$. Second, the $4572$ codewords of minimum weight in
${\mathcal{B}}_2^{\perp}$ can be partitioned into $36$ cyclic
orbits, which can again be subdivided in 6 families. The stopping sets of the matrices $\ve{H}_{\ell}$, $\ell\in\{\mathcal{F}_f\}$, $f\in\{1,\dots, 6\}$, are given in Table~\ref{table:unresolved_stopping_sets_bch_127_113}.

\begin{table}
\begin{center}
\begin{tabular}{c|cccccc}\hline
&\multicolumn{6}{c}{$\left|{\mathcal{S}}_{\sigma}(\ve{H}_{\ell})\right|$, $\ell\in$}\\
&${\mathcal{F}}_1$&${\mathcal{F}}_2$&${\mathcal{F}}_3$&${\mathcal{F}}_4$&${\mathcal{F}}_5$&${\mathcal{F}}_6$\\\hline
$\sigma\leq 4$&$0$&$0$&$0$&$0$&$0$&$0$\\
$\sigma=5$&$16002$&$16002$&$16002$&$16002$&$16002$&$16002$\\
\end{tabular}
\end{center}
\caption{\label{table:unresolved_stopping_sets_bch_127_113}Number
of stopping sets for the $[127,113,5]$ BCH code, with
parity-check matrices $\ve{H}_{\ell}$, $\cog_{\ell}\in{\mathcal{F}}_f$, $f=1,\dots,6$.}
\end{table}

It is to be observed that the chosen performance criterion, the number of stopping sets of size up to $d$, is equal for all families.
This goes with our observation that the performance on the AWGN does
not depend significantly on the choice of the family. We use MBBP with
cogs from arbitrarily chosen families and set the number of matrix
representations to $l=10$.
Figure~\ref{fig:bch_127_113_oversized_parallel_vs_iterations_mode_10_decoders_100_iter_max_100_iter}
depicts the simulation results for the $\BER$.

\begin{figure}[ht!]
\begin{center}
\subfigure{
\psfrag{BER}[cb][cb]{$\BER\,\rightarrow$}
\psfrag{SNR}[ct][ct]{$10\log_{10}(\Eb/N_0)\,\rightarrow$}
\psfrag{ber 127 113 representations}[l][l]{\scriptsize{Standard BP decoding}}
\psfrag{ber 127 113 mbbp-nx s}[l][l]{\scriptsize{MBBP-NX-S, $l=10$}}
\psfrag{ber 127 113 mbbp-nx iter}[l][l]{\scriptsize{MBBP-NX-FS, $l=10$}}
\psfrag{ber 127 113 mbbp-x average}[l][l]{\scriptsize{MBBP-X-PA, $l=10$}}
\psfrag{ber 127 113 mbbp-x infocomb}[l][l]{\scriptsize{MBBP-X-IC, $l=10$}}
\psfrag{ber 127 113 mbbp-x maximum}[l][l]{\scriptsize{MBBP-X-HR, $l=10$}}
\psfrag{ber 127 113 MLSE union bound}[l][l]{\scriptsize{ML Union bound}}
\psfrag{ber 127 113 MLSE}[l][l]{\scriptsize{ML}}
\psfrag{ber 127 113 gallager bound}[l][l]{\scriptsize{Gallager bound}}
\includegraphics[scale=0.48]{./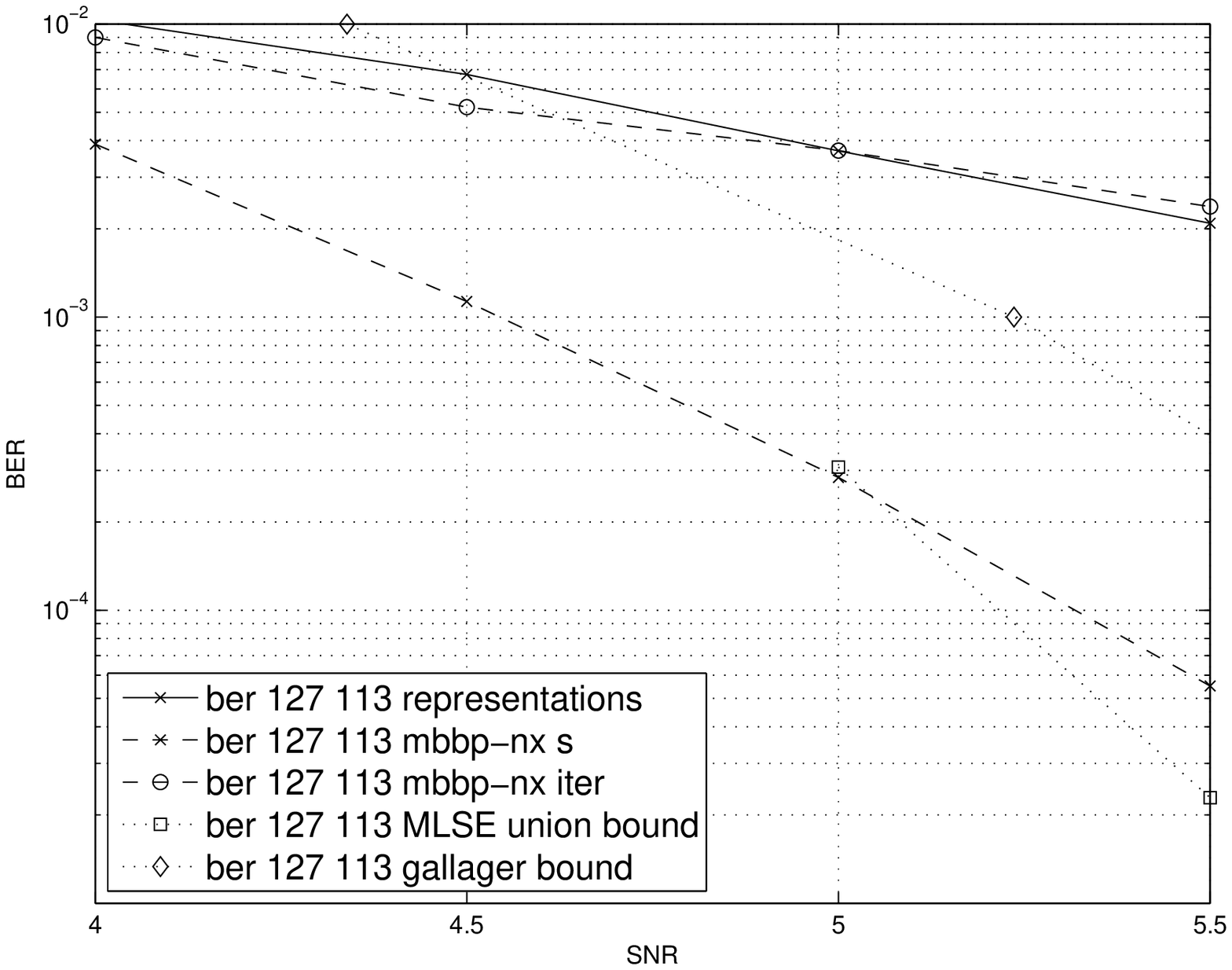}
}
\caption{\label{fig:bch_127_113_oversized_parallel_vs_iterations_mode_10_decoders_100_iter_max_100_iter}
Performance comparison for the
$[127,113,5]$-BCH code.}
\end{center}
\end{figure}

The performance of MBBP-NX-S follows the ML union bound very
closely. Again, MBBP-NX-FS uses a significantly lower number of
decoder iterations to converge, but has very poor decoding performance. Furthermore, for this BCH code, all MBBP-X
approaches offer poor performance results, even worse than
those obtained via BP decoding. This is due to the very high density of
the parity-check matrices and the fact that MBBP-X approaches
introduce dependencies between the variable nodes that
exchange information. For this reason, only the
performance results of MBBP-NX approaches are shown in
Figure~\ref{fig:bch_127_113_oversized_parallel_vs_iterations_mode_10_decoders_100_iter_max_100_iter}.
\fi

\subsection{$[47,24,11]$-QR code}

We present next MBBP simulation results for the $[47,24,11]$-QR
code, henceforth denoted by ${\mathcal{Q}}$. The code
${\mathcal{Q}}^{\perp}$ has minimum Hamming distance $12$, and
there exist $276$ codewords of this weight. These codewords can be
partitioned into eight cog families, labeled ${\mathcal{F}}_1$ to
${\mathcal{F}}_8$. In order to identify the family with best
stopping set properties, we compute the number of stopping sets of
size up to and including $\sigma=9$ in cog-based matrices of each
family. Table \ref{tab:unresolved_stopping_sets_qr_47_24} allows
for identifying the family with the best stopping set properties,
${\mathcal{F}}_3$. This family contains $23$ cogs.
Figure~\ref{fig:qr_47_24_oversized_parallel_vs_iterations_mode_23_decoders_group_C_100_iter_max}
plots simulation results for $l=23$ representations with cogs
chosen from ${\mathcal{F}}_3$.

\begin{table}[ht!]
\begin{center}
\begin{tabular}{c|cccccccc}\hline
&\multicolumn{7}{c}{$\left|{\mathcal{S}}_{\sigma}(\ve{H}_{\ell})\right|$, $\ell\in$}\\
&${\mathcal{F}}_1$&${\mathcal{F}}_2$&${\mathcal{F}}_3$&${\mathcal{F}}_4$&${\mathcal{F}}_5$
&${\mathcal{F}}_6$&${\mathcal{F}}_7$&${\mathcal{F}}_8$\\\hline
$\sigma\leq 7$&$0$&$0$&$0$&$0$&$0$&$0$&$0$&$0$\\
$\sigma=8$&$47$&$0$&$0$&$47$&$47$&$47$&$94$&$47$\\
$\sigma=9$&$\times$&$2961$&$2209$&$\times$&$\times$&$\times$&$\times$&$\times$\\\hline
\end{tabular}
\end{center}
\caption{\label{tab:unresolved_stopping_sets_qr_47_24}Number of
stopping sets for parity-check matrices of cyclic form for the
$[47,24,11]$-QR code.}
\end{table}

\begin{figure}[ht!]
\begin{center}
\psfrag{BER}[cb][cb]{$\BER\,\rightarrow$}
\psfrag{SNR}[ct][ct]{$10\log_{10}(\Eb/N_0)\,\rightarrow$}
\psfrag{ber 47 24 representations}[l][l]{\scriptsize{Standard BP decoding}}
\psfrag{ber 47 24 mbbp-nx s}[l][l]{\scriptsize{MBBP-NX-S, $l=23$}}
\psfrag{ber 47 24 mbbp-nx iter}[l][l]{\scriptsize{MBBP-NX-FS, $l=23$}}
\psfrag{ber 47 24 mbbp-x average}[l][l]{\scriptsize{MBBP-X-PA, $l=23$}}
\psfrag{ber 47 24 mbbp-x infocomb}[l][l]{\scriptsize{MBBP-X-IC, $l=23$}}
\psfrag{ber 47 24 mbbp-x maximum}[l][l]{\scriptsize{MBBP-X-HR, $l=23$}}
\psfrag{ber 47 24 MLSE union bound}[l][l]{\scriptsize{ML Union bound}}
\psfrag{ber 47 24 MLSE}[l][l]{\scriptsize{ML}}
\psfrag{ber 47 24 gallager bound}[l][l]{\scriptsize{Gallager bound}}
\includegraphics[scale=0.48]{./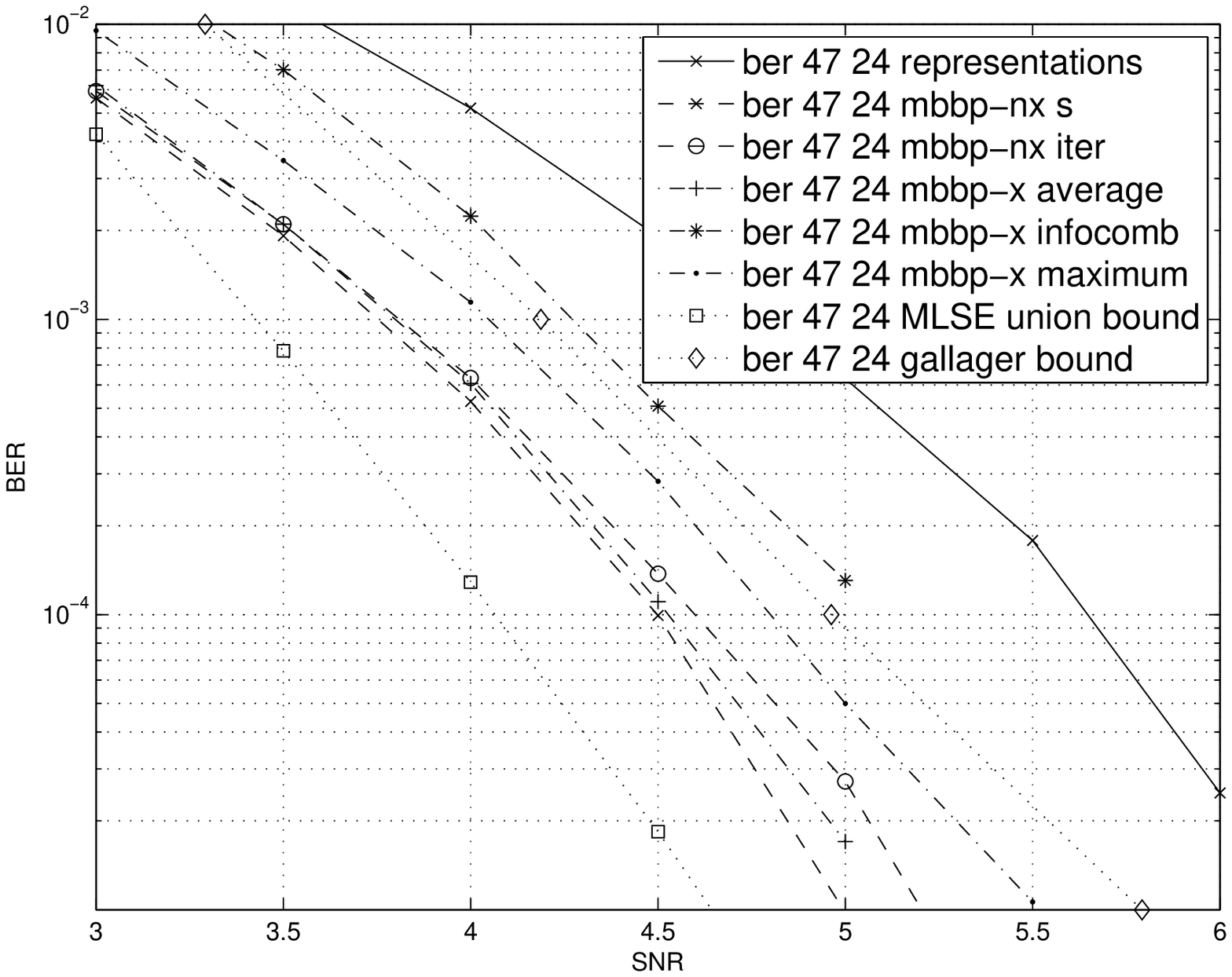}
\caption{\label{fig:qr_47_24_oversized_parallel_vs_iterations_mode_23_decoders_group_C_100_iter_max}
Performance comparison for the $[47,24,11]$-QR code.}
\end{center}
\end{figure}

We list the results for the $[47,24,11]$-QR code as these results allow to
point out two remarkable properties of MBBP decoding. First, we observe that
MBBP decoding can provide a decoding performance which is superior to
the performance reported by the Gallager bound. This is remarkable, as the Gallager bound indicates desirable performance
results for codes of given length and rate. Compared to this bound, standard BP decoding obtains by far poorer performance results. Second, the
$[47,24,11]$-QR code shows that non-standard MBBP decoders can
outperform their standard counterparts.  The approaches MBBP-X-PA and
MBBP-X-HR lead to performance results close to MBBP-NX-S, while the
approach MBBP-X-PA even outperforms MBBP-NX-FS. As in the previous
examples, the MBBP-X-IC decoder suffers from the introduced dependencies that arise due to data exchange between the decoders via
information combining. For this reason, the latter approach shows a
slightly degraded performance, yet still significantly better than
that of standard BP.

\subsection{$[127,64,21]$-BCH code}

The $[127,64,21]$-BCH code, denoted as ${\mathcal{B}}_2$, has
dimension $64$ and a co-dimension $63$. It is prohibitively costly
to decode this code by means of an ML decoder or a trellis
decoder. Also, it is computationally infeasible to find all
minimum weight codewords of the dual code. Instead, we apply the
algorithms provided in~\cite{huetal04} to identify a tight upper
bound on the minimum distance of the dual code as well as a (most
likely incomplete) set of codewords with weight equal to the
estimated minimum distance of the dual code. We report that $22$
is an upper bound on the minimum distance of
${\mathcal{B}}_2^{\perp}$, and that
at least $21$ cog families exist. As it is a very complex task to
identify the number of stopping sets up to the size $d=21$ in a code
of length $127$, we approximate the performance criterion by
evaluating only the number of stopping sets up to $\sigma=5$. This
allows us to conclude that all families yield comparable decoding
performance.

Figure~\ref{fig:bch_127_64_oversized_parallel_vs_iterations_mode_10_decoders_100_iter_max_100_iter}
shows the $\BER$ performance for different variants of the MBBP
approach, where $l=10$ decoders were run in parallel. There exist
similar observations for the $\FER$, but these results are not
shown due to space limitations. Also given is the number of
iterations required for the convergence of different decoding
approaches. In this context, we distinguish between the average
number of iterations required and the maximum number of iterations
required to decode one codeword. It is also worth pointing out
that these results are representative for all codes discussed in this work.

\begin{figure}[ht!]
\begin{center}
\subfigure[$\BER$ performance]{
\psfrag{BER}[cb][cb]{$\BER\,\rightarrow$}
\psfrag{SNR}[ct][ct]{$10\log_{10}(\Eb/N_0)\,\rightarrow$}
\psfrag{ber 127 64 representations___}[l][l]{\scriptsize{Standard BP decoding}}
\psfrag{ber 127 64 mbbp-nx s}[l][l]{\scriptsize{MBBP-NX-S, $l=10$}}
\psfrag{ber 127 64 mbbp-nx iter}[l][l]{\scriptsize{MBBP-NX-FS, $l=10$}}
\psfrag{ber 127 64 mbbp-x average}[l][l]{\scriptsize{MBBP-X-PA, $l=10$}}
\psfrag{ber 127 64 mbbp-x infocomb}[l][l]{\scriptsize{MBBP-X-IC, $l=10$}}
\psfrag{ber 127 64 mbbp-x maximum}[l][l]{\scriptsize{MBBP-X-HR, $l=10$}}
\psfrag{ber 127 64 gallager bound}[l][l]{\scriptsize{Gallager bound}}
\includegraphics[scale=0.48]{./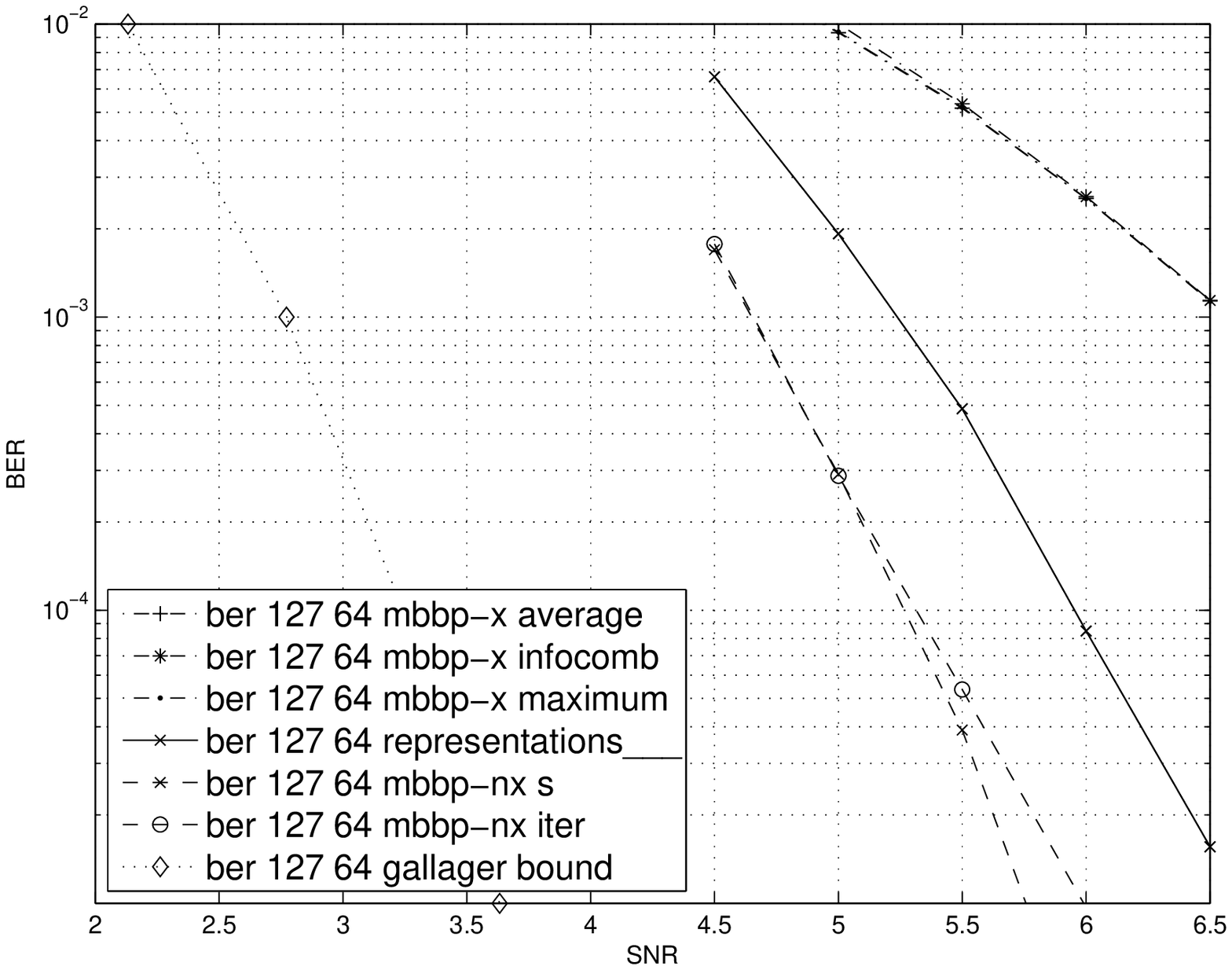}
}\subfigure[Average and maximum number of iterations]{
\psfrag{Average and Max Iter}[cb][cb]{Average $/$ max.\ iterations $\rightarrow$}
\psfrag{SNR}[ct][ct]{$10\log_{10}(\Eb/N_0)\,\rightarrow$}
\psfrag{iter 127 64 representations_______}[l][l]{\scriptsize{Standard BP decoding, avg}}
\psfrag{iter 127 64 mbbp-nx s}[l][l]{\scriptsize{MBBP-NX-S, $l=10$, avg}}
\psfrag{iter 127 64 mbbp-nx iter}[l][l]{\scriptsize{MBBP-NX-FS, $l=10$, avg}}
\psfrag{iter 127 64 representations max}[l][l]{\scriptsize{Standard BP decoding, max}}
\psfrag{iter 127 64 mbbp-nx classic max}[l][l]{\scriptsize{MBBP-NX-S, max}}
\psfrag{iter 127 64 mbbp-nx iter max}[l][l]{\scriptsize{MBBP-NX-FS, max}}
\includegraphics[scale=0.48]{./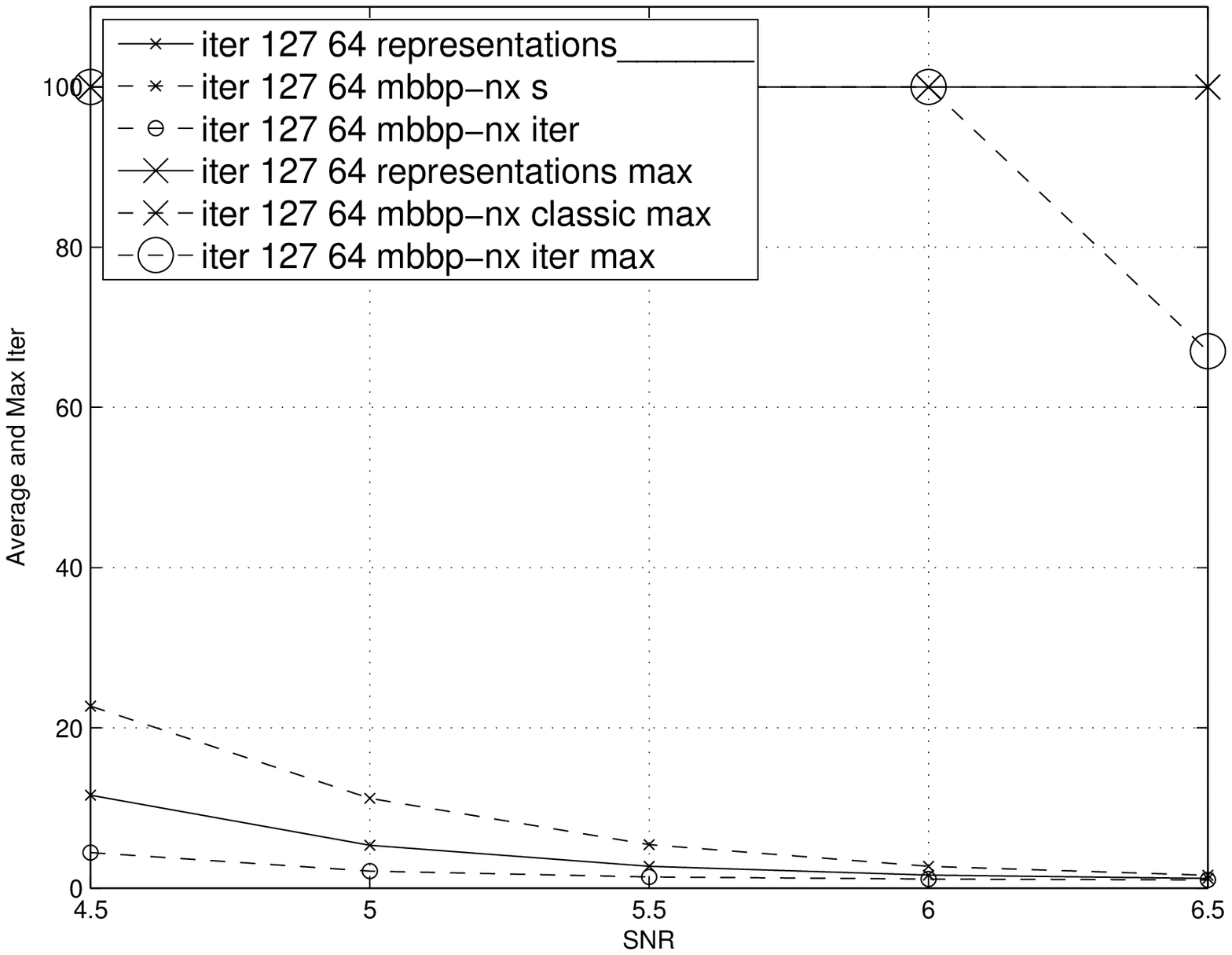}
}
\caption{\label{fig:bch_127_64_oversized_parallel_vs_iterations_mode_10_decoders_100_iter_max_100_iter}Performance comparison and number of iterations for the $[127,64,21]$-BCH code.}
\end{center}
\end{figure}

We observe a behavior that is characteristic for MBBP-NX decoding
in connection with codes of high dimension and co-dimension. A
significant performance improvement compared to BP decoding is
obtained. However, there remains a gap to the Gallager bound. For
this particular code, we observe that MBBP-NX-FS is a very
promising approach. In this case, a significant decrease in the
number of average iterations can be obtained by using the
MBBP-NX-FS decoder, while keeping the decoding performance close
to that of the MBBP-NX-S class. However, all MBBP-X approaches
perform very poorly. Reasons for this phenomena include the fact
that, due to communication between the decoders, the number of
``virtual short cycles'' may be increased.

The performance of the MBBP-NX-S decoder matches the performance obtained with the ordered statistics decoding \cite{fossorieretal95} for first order processing. The approach in
\cite{fossorieretal95} requires for each codeword a sorting of the
received vector according to the received symbol reliability.
Furthermore, a Gauss-Jordan algorithm is applied on the
correspondingly resorted generator matrix, and, in order to obtain a
decoding performance comparable to MBBP-NX-S with $l=10$, up to $64$
pattern tests are performed. We emphasize that MBBP avoids the use of
Gauss-Jordan's algorithm, which is the computationally most costly
component of the decoder in~\cite{fossorieretal95}. Raising the number
of test patterns further, the decoder in~\cite{fossorieretal95} can
outperform the MBBP-NX-S approach with $l=10$ parallel
decoders. Similar observations are made for the Chase Type-2
algorithm. While MBBP-NX-S outperforms Chase Type-2 decoders for codes
of short length \cite{hehnetal07}, Chase Type-2 is superior for longer
codes such as the $[127,64,21]$-BCH code \cite{fossorieretal00}. This
is strongly related to the number of test patterns, which reads
$2^{\lfloor d/2\rfloor}=1024$ and hence significantly higher than the
diversity order in any reasonable MBBP-NX-S setup.

\section{Conclusions}
\label{sec:conclusions}

We introduced a class of decoding algorithms that operate in
parallel on a judiciously chosen family of parity-check matrices.
We considered two variants of this class of techniques: one, in
which the decoders are not allowed to exchange information during
individual runs of the BP algorithm, and another, in which
periodic information exchange is allowed. Algorithms in the first
class were shown to offer significant performance improvements
when compared to the standard BP technique. The approaches in the
second class often compare favorably to standard BP, but do not match
the performance of algorithms that do not make use of periodic
information exchange. Possible reasons for this behavior include
the fact that ``cycles'' are created during the process of
information exchange. These cycles ``in-between'' the graphs of
the representations negatively affect the performance of each
decoder.

The presented approaches were shown to work for classical
high-density codes and are applicable to cyclic and extended cyclic codes. It is possible to generalize the introduced
methods to codes like the progressive edge-growth
(PEG)~\cite{huetal05} family. For this class of codes, significant gains in performance can be obtained \cite{hehn09}.

\bibliographystyle{ieeetr}
\bibliography{LDPC_Group_Bibfile}

\end{document}